\newcommand{\ALOOP}[1]{\ALC@it\algorithmicloop\ #1%
	\begin{ALC@loop}}
	\newcommand{\ENDALOOP}{\end{ALC@loop}\ALC@it\algorithmicendloop}
\algnewcommand\INITIALIZE{\item[\textbf{Initialization:}]}%
\newtheorem{thm}{\bf Theorem}
\newtheorem{prop}{\bf Proposition}
\newtheorem{lem}{\bf Lemma}
\newtheorem{assmpt}{\bf Assumption}
\newtheorem{defn}{\bf Definition}
\newtheorem{rem1}{\bf Remark}
\newcounter{Examp}
\newenvironment{rem}{\begin{rem1}\em}{\hfill$\Diamond$\end{rem1}}
\newcommand{\C}{\ensuremath{{\mathbb C}}}
\newcommand{\R}{\ensuremath{{\mathbb R}}}
\newcommand{\N}{\ensuremath{{\mathbb N}}}
\newcommand{\DD}{{\mathcal D}}
\newcommand{\MM}{{\mathcal M}}
\newcommand{\NN}{{\mathcal N}}
\newcommand{\RR}{{\mathcal R}}
\newcommand{\VV}{{\mathcal V}}
\newcommand{\PP}{{\mathcal P}}
\newcommand{\OO}{{\mathcal O}}
\newcommand{\EE}{{\mathcal E}}
\newcommand{\CC}{{\mathcal C}}
\newcommand{\supp}{\mathsf{supp}}
\newcommand{\cospark}{\mathsf{cospark}}
\newcommand{\ii}{\mathsf{i}}
\newcommand{\jj}{\mathsf{j}}
\newcommand{\mm}{\mathsf{m}}
\newcommand{\nn}{\mathsf{n}}
\newcommand{\pp}{\mathsf{p}}
\newcommand{\qq}{\mathsf{q}}
\newcommand{\rr}{\mathsf{r}}
\newcommand{\np}{\mathsf{np}}
\DeclareMathOperator*{\argmin}{arg\,min}
\def\BibTeX{{\rm B\kern-.05em{\sc i\kern-.025em b}\kern-.08em
    T\kern-.1667em\lower.7ex\hbox{E}\kern-.125emX}}
\begin{document}
	
	\setlength{\abovedisplayskip}{5pt}
	\setlength{\belowdisplayskip}{5pt}
	
\title{On Redundant Observability: From Security Index to Attack Detection and Resilient State Estimation}
\author{Chanhwa Lee, Hyungbo Shim, and Yongsoon Eun
\thanks{A preliminary version of this paper was presented at the 14th European Control Conference (ECC'15) as \cite{Lee15}, where a theoretical derivation of resilient state estimation for a continuous-time system was mainly discussed without any concrete structure, detailed operation algorithm, or relationship with the security index and attack detection.}
\thanks{This work was supported in part by Institute for Information \& communications Technology Promotion (IITP) grant funded by the Korea government (MSIP) (2014-0-00065, Resilient Cyber-Physical Systems Research) and in par by Global Research Laboratory Program through the National Research Foundation of Korea (NRF) funded by the Ministry of Science and ICT (NRF-2013K1A1A2A02078326).}
\thanks{C.~Lee is with Research \& Development Division, Hyundai Motor Company, Korea (e-mail: chanhwa.lee@gmail.com). }
\thanks{H.~Shim is with ASRI, Department of Electrical and Computer Engineering, Seoul National University, Korea (e-mail: hshim@snu.ac.kr). }
\thanks{Y.~Eun is with Department of Information \& Communication Engineering, DGIST, Korea (e-mail: yeun@dgist.ac.kr).}}

\maketitle

\begin{abstract}
The security of control systems under sensor attacks is investigated. 
Redundant observability is introduced, explaining existing security notions including the security index, attack detectability, and observability under attacks. 
Equivalent conditions between redundant observability and existing notions are presented. 
Based on a bank of partial observers utilizing Kalman decomposition and a decoder exploiting redundancy, an estimator design algorithm is proposed enhancing the resilience of control systems. 
This scheme substantially improves computational efficiency utilizing far less memory. 

\end{abstract}

\begin{IEEEkeywords}
Analytical redundancy, attack detection, attack resilience, cyber-physical systems, resilient state estimation, security index.
\end{IEEEkeywords}

\noindent
{\em Notation:} The subset of natural numbers, $\left\{1, 2, \cdots, \pp \right\} \subset \N$, is denoted by $[\pp]$. The cardinality of a set $S$ is denoted by $|S|$ and the support of a vector $y \in \C^\pp$ is defined as $\supp(y):= \left\{ \ii \in [\pp] : y_\ii \neq 0 \right\}$ where $y_\ii$ is the $\ii$-th element of $y$. The cardinality of $\supp(y)$ defines the $\ell_0$ norm of a vector $y$, i.e., $\|y\|_0:= |\supp(y)|$.
A vector $y$ is said to be $\qq$-sparse if $\|y\|_0 \leq \qq$.
The set $\Sigma_\qq := \left\{ y \in \C^\pp : \|y\|_0 \leq \qq \right\}$ denotes the set of all $\qq$-sparse vectors.
The 2-norm of a vector $y$ is defined as $\|y\|_2:= \sqrt{y^* y}$ where $y^*$ is the Hermitian of $y$.

Assume that a vector $y \in \C^\pp$ and a subset $\Lambda \subset [\pp]$ of indices are given. 
We use the notation  $y_\Lambda\in \C^\pp$ to denote that $y_\Lambda$ is obtained by setting the elements of $y$ indexed by 
$\Lambda^c := [\pp]\setminus\Lambda = \left\{ \ii \in [\pp] : \ii \notin \Lambda \right\}$
to zero. Similar notation is used for a matrix $C \in \R^{\pp\times \nn}$. The matrix obtained by setting the rows of $C$ indexed by $\Lambda^c$ to zero, is denoted as $C_\Lambda \in \R^{\pp\times \nn}$. 
Sometimes the notation will be slightly modified to $y_\Lambda^\pi \in \C^{|\Lambda|}$ (or $C_\Lambda^\pi \in \R^{|\Lambda|\times \nn}$), which denotes the vector $y$ (or the matrix $C$) whose elements (or rows) not corresponding to the index set $\Lambda$ are actually eliminated.

For a given index $\ii \in [\pp]$, the index set $\Gamma_\ii^\nn \subset [\np]$ represents $\left\{ \nn(\ii-1)+1, \nn(\ii-1)+2, \cdots, \nn\ii \right\}$. Similarly, for a given index set $\Lambda \subset [\pp]$, the index set $\Lambda^\nn \subset [\np]$ denotes $ \bigcup_{\ii \in \Lambda} \Gamma_\ii^\nn$.
A vector $z \in \C^{\np}$ of length $\np$ can be split into $\pp$ column vectors of length $\nn$, i.e., $z =\left[ z_1^{\nn\top} ~ z_2^{\nn\top} ~ \cdots ~ z_\pp^{\nn\top} \right]^\top \in \C^{\np}$, where $z_\ii^\nn \in \C^\nn$ represents the $\ii$-th split column vector of length $\nn$ in $z$.
Then we call $z$ an $\nn$-stacked vector.
With the index set $\Gamma_\ii^\nn$ defined above, it follows that $z_\ii^\nn = z_{\Gamma_\ii^\nn}^\pi \in \C^\nn$. 
The ($\nn$-stacked) support of $z \in \C^\np$ is defined as $\supp^\nn(z):= \left\{ \ii \in [\pp] : z_\ii^\nn \neq 0_{\nn\times 1} \right\}$
and its cardinality defines the ($\nn$-stacked) $\ell_0$ norm of $z$, i.e.,
$\|z\|_{0^\nn}:= |\supp^\nn(z)|.$
Similarly to the usual vector case, an $\nn$-stacked vector $z$ is said to be ($\nn$-stacked) $\qq$-sparse when it holds that $\|z\|_{0^\nn} \leq \qq$, and the set
$\Sigma_\qq^\nn := \left\{ z  \in \C^\np : \|z\|_{0^\nn} \leq \qq \right\}$
denotes the set of all ($\nn$-stacked) $\qq$-sparse vectors.

For a matrix $C \in \R^{\pp\times\nn}$, the cospark of $C$ is defined as $\displaystyle\cospark(C) := \min_{x\in\R^\nn,\!~ x\neq 0_{\nn\times 1}} \|C x\|_0$
and the ($\nn$-stacked) cospark of a matrix $\Phi \in \R^{\np\times\nn}$ is similarly defined as
$\displaystyle\cospark^\nn(\Phi) := \min_{x\in\R^\nn,\!~ x\neq 0_{\nn\times 1}} \|\Phi x\|_{0^\nn}.$
Subspaces $\RR(C)$ and $\NN(C)$ denote the range space and the null space of $C$, respectively.
The induced matrix 2-norm of a matrix $C$ is defined as $\|C\|_2:= \sqrt{\lambda_{\max} \left( C^\top C \right) } = \sigma_{\max}(C)$ where $\lambda_{\max}(\cdot)$ and $\sigma_{\max}(\cdot)$ denote the maximum eigenvalue and the maximum singular value, respectively. 
In addition, $\sigma_{\min}(\cdot)$ is used to denote the minimum singular value and $C^{\dagger}$ is the pseudoinverse of $C$.
Finally, the set of normalized eigenvectors of a square matrix $A \in \R^{\nn\times\nn}$ is denoted as $\VV(A) := \left\{ v\in \C^\nn : A v = \lambda v {\rm~for~some~} \lambda\in\C,~\|v\|_2 =  1  \right\}.$

\section{Introduction}

The reliability of systems in various circumstances is one of the main concerns for control engineers, and thus robust and fault-tolerant control methods have been developed to cope with model uncertainties, external disturbances, and failures in system components. 
Recently, new threats or vulnerabilities caused by malicious attacks have been reported as advances in computers and communications increase the connectivity and openness of systems \cite{Langner11}.
Therefore, the resilience of control systems to attack has become a critical system design consideration \cite{Mo12, Pasqualetti13, Sandberg15} and the security problems of the system whose measurements are compromised by adversaries have been studied actively because sensors are one of the most vulnerable points for the security of control systems \cite{Liu11,Hendrickx14,Chen15,Fawzi14,Pajic14,Shouky14,Shou15ACC,Shouky16,Chong15,Jeon16}.

In this paper, we consider a discrete-time linear time invariant (LTI) system under sensor attacks written as
\begin{align}[left = \PP: \empheqlbrace\,] 
\begin{split} \label{eq:plant}
& x(k+1) = A x(k) + B u(k) + d(k) \\
& \bar y(k) = y(k) + a(k) = C x(k) + n(k) + a(k),
\end{split}
\end{align}
where $x \in \R^\nn$ denotes the state variables, $u \in \R^\mm$ denotes the control inputs, $y \in \R^\pp$ denotes the attack-free sensor outputs, and  $\bar y \in \R^\pp$ denotes the measurement data under attack signals.
The dynamics are disrupted by the process disturbance $d \in \R^\nn$ and sensors are corrupted by the sensor attack $a \in \R^\pp$ as well as the measurement noise $n \in \R^\pp$.
There is a total of $\pp$ sensors that measure the system outputs and the $\ii$-th measurement data at time $k$ is denoted by $\bar y_{\ii}(k) = c_{\ii} x(k) + n_{\ii} (k) + a_{\ii} (k)$, where $c_\ii$ is the $\ii$-th row of $C$. 
It is assumed that the disturbances/noises are uniformly bounded, and the attacks can compromise up to $\qq$ out of $\pp$ sensor outputs, as follows. 

\begin{assmpt} \label{ass:bdd}
	The process disturbance $d$ and each measurement noise $n_\ii$ are uniformly bounded, i.e.,
	\begin{equation*} 
	\|d(k)\|_2 \le d_{\max}, ~~ \|n_\ii (k)\|_2 \le n_{\max}, ~~ ^\forall k \ge 0, ~~ ^\forall \ii \in [\pp].\tag*{$\Diamond$}
	\end{equation*}
\end{assmpt}

\begin{assmpt} \label{ass:sparse}
	There exist at least $\pp-\qq$ sensors that are not attacked for all $k \ge 0$, i.e., 
	\begin{equation*} 
	\left| \left\{ \ii \in [\pp] : a_\ii (k)=0, ~^\forall k \ge 0 \right\} \right| \ge \pp-\qq.\tag*{$\Diamond$}
	\end{equation*}
\end{assmpt}

\begin{figure}
	\centering
	\includegraphics[width=0.8\columnwidth]{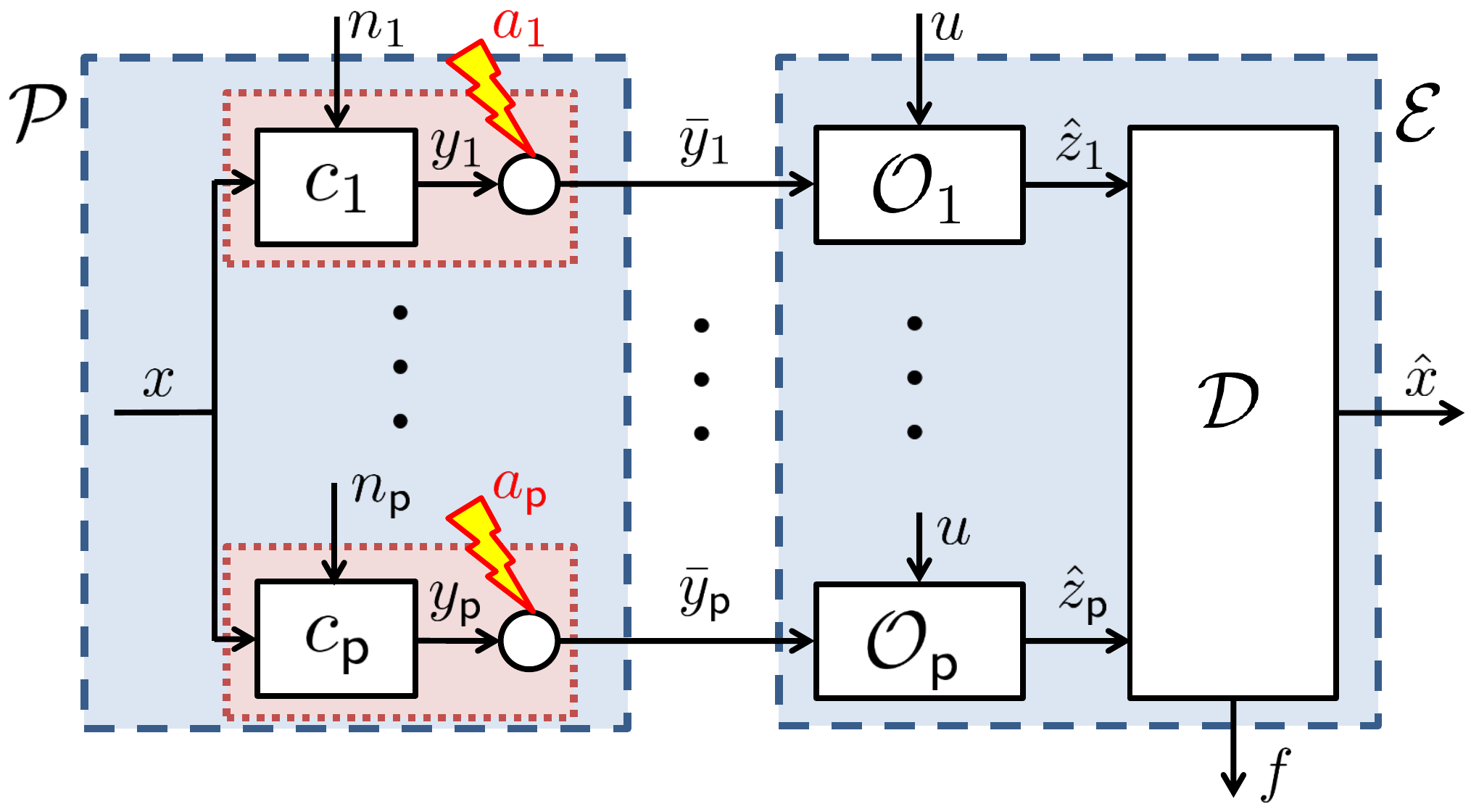}
	\caption{Configuration of the plant $\PP$ and the state estimator $\EE$.
	}\label{fig:configuration}
\end{figure}

The primary objective of this paper is to design an estimator $\EE$ that detects the attacked sensors and estimates the state $x(k)$ of the given system $\PP$ under Assumptions \ref{ass:bdd} and \ref{ass:sparse}. 
To this end, we first characterize the conditions under which the attack can be detected and the state of $\PP$ can be estimated correctly. 
Then, we construct an attack-resilient estimator $\EE$ that is composed of $\pp$ partial observers\footnote{In this paper, the terms ``observer'' and ``estimator'' are used to indicate the block of $\OO_\ii$ and $\EE$ in Fig.~\ref{fig:configuration}, respectively. That is, two terminologies should be distinguished.} $\OO_\ii$ and a decoder $\DD$ as shown in Fig.~\ref{fig:configuration}.
In other words, the characterization of observability with unknown signal $a$ and construction of the estimator are the two main topics of this paper.\footnote{From the control theoretic perspective, strong observability \cite{Basile69} and unknown input observer (UIO)-based fault estimation \cite{Gao16} may be closely related to the subject of interest here.
	If we consider the output equation $\bar y(k) = C x(k) + n(k) + I_\Lambda a(k)$ (instead of imposing Assumption \ref{ass:sparse} on $a$ in \eqref{eq:plant}) where $I \in \R^{\pp \times \pp}$ is an identity matrix and $\Lambda \subset [\pp]$ is any index set satisfying $|\Lambda| \le \qq$, then, as mentioned in \cite{Shouky14}, the problem of interest is strong observability for any $\qq$-sparse identity matrix $I_\Lambda$, and the design of a UIO-based estimator for unknown $\Lambda$.}

In the first part of this paper, a vulnerability analysis is conducted.
Fundamental limitations such as attack detectability (and identifiability) conditions have been investigated in \cite{Pasqualetti13} and the attack detectability is quantified by the security index \cite{Hendrickx14}, which is the minimum number of attacks to remain undetectable.
This security index concept for a static output map is generalized to a dynamical system under sensor attacks in \cite{Chen15}. 
We have carefully explained the relationship between these fundamental limitations and the {\it redundant observability}, which is a kind of the analytical redundancy in measurements and will be formally defined in Section \ref{sec:red}.
Furthermore, equivalent conditions between them are also presented.

In the second part of this paper, we propose a resilient and robust state estimation scheme.
Compared with the existing resilient estimation algorithms in \cite{Fawzi14,Pajic14,Shouky14,Shou15ACC,Shouky16,Chong15,Jeon16}, the advantages of our scheme are as follows.
First, it does not require any additional restrictive conditions other than the redundant observability (compared with \cite{Fawzi14,Shouky14,Shouky16,Jeon16}).
Second, an observer-based algorithm makes it possible to estimate the current state, not the initial state or delay information (compared with \cite{Fawzi14,Pajic14,Shou15ACC}).
Third, the scheme is robust in the sense that a bound on estimation error is explicitly derived from system parameters (compared with \cite{Fawzi14,Pajic14,Shouky14,Shou15ACC,Shouky16,Chong15}).
Finally, the scheme requires less computational effort and less memory owing to the reduction in time and space complexity (compared with \cite{Chong15}).

The rest of the paper is organized as follows. 
Section \ref{sec:err} presents the theoretical background of static error correcting problems for a stacked vector case. 
We then present the relationship between redundant observability and security related concepts such as dynamic security index, attack detectability, and observability under attacks in Section \ref{sec:red}.
In addition, partial observers using the Kalman observability decomposition are designed and the overall resilient and robust estimation scheme is presented in Section \ref{sec:est}. 
Finally, simulation results with a three inertia system are given in Section \ref{sec:example} and we provide concluding remarks in Section \ref{sec:con}.

\section{Static Error Correction for Stacked Vector} \label{sec:err}

In this section, a static error correcting algorithm is studied that will play a key role for constructing the decoder $\DD$ in the estimator $\EE$.
In particular, we solve a particular problem: 
given a matrix $\Phi \in \R^{{\np}\times \nn}$, recover an unknown vector $x \in \R^\nn$ from the known measurement $\hat z$ given by\footnote{Later on, the analysis in Section \ref{sec:red} is performed based on the measurement equation \eqref{eq:baryn} and the design in Section \ref{sec:est} is carried out based on the estimation error equation \eqref{eq:dyn_hatz}. Note that both equations are in the form of \eqref{eq:static_y}.} 
\begin{equation} \label{eq:static_y}
\hat z = \Phi x + v + e \in \R^{\np},
\end{equation}
where the $\nn$-stacked vector $\hat z \in \R^{\np}$ is corrupted by two more unknown vectors $v \in \R^{\np}$ and $e \in \R^{\np}$.
The vector $v$ represents noise and is assumed to have bounded magnitude.
The vector $e$ is called error, and it corresponds to an attack signal whose magnitude can be arbitrarily large but is assumed to be sparse.
The matrix $\Phi$ is called a {\em coding matrix}.

\subsection{Error Detectability and Detection Scheme} \label{subsec:e_detec}

One should be able to detect the existence of an error to reconstruct the original state vector $x$.
Thus, we start this subsection by introducing the notion of {\em error detectability} when the measurement $\hat z$ in \eqref{eq:static_y} is noise-free (i.e., $v = 0_{\np \times 1}$).

\begin{defn} \label{defn:detect}
	A coding matrix $\Phi \in \R^{{\np}\times \nn}$ is said to be {\em ($\nn$-stacked) $\qq$-error detectable} if, for all $x, x' \in \R^\nn$ and $e \in \Sigma_\qq^\nn$ such that $\Phi x + e = \Phi x'$, it holds that $x = x'$.	
\end{defn}

Therefore, the matrix $\Phi \in \R^{{\np}\times \nn}$ is not ($\nn$-stacked) $\qq$-error detectable if and only if there are two different $x$ and $x'$ in $\R^\nn$, and $e$ in $\Sigma_\qq^\nn$ such that $\Phi x + e = \Phi x'$. 
Now, two more equivalent conditions that characterize the error detectability of a coding matrix $\Phi$ are given.

\begin{prop} \label{lem:detec_equiv}
	The following are equivalent:\\
	(i) the matrix $\Phi \in \R^{{\np}\times \nn}$ is ($\nn$-stacked) $\qq$-error detectable;\\
	(ii) for every set $\Lambda \subset [\pp]$ satisfying $|\Lambda| \ge \pp-\qq$, $\Phi_{\Lambda^\nn}$ (or, equivalently, $\Phi_{\Lambda^\nn}^\pi$) has full column rank;\\
	(iii) for any $x \in \R^\nn$ where $x \neq 0_{\nn \times 1}$, $\|\Phi x \|_{0^\nn} > \qq$. 
\end{prop}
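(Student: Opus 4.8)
The plan is to establish the cycle of implications (i) $\Rightarrow$ (iii) $\Rightarrow$ (ii) $\Rightarrow$ (i), translating the statements about error detectability into statements about the stacked support of $\Phi x$ and the column ranks of row-restricted submatrices. Throughout, the key observation is that $\Phi x + e = \Phi x'$ for some $e \in \Sigma_\qq^\nn$ is equivalent to $\Phi(x-x') \in \Sigma_\qq^\nn$, i.e., $\|\Phi w\|_{0^\nn} \le \qq$ with $w := x - x'$; so detectability is precisely the statement that $\Phi w \in \Sigma_\qq^\nn$ forces $w = 0_{\nn\times 1}$, which is almost verbatim (iii). I would record this reformulation first as it makes (i) $\Leftrightarrow$ (iii) essentially immediate (contrapositive on both sides: if some $w\neq 0$ has $\|\Phi w\|_{0^\nn} \le \qq$, set $x = w$, $x' = 0$, $e = \Phi w$ to violate detectability, and conversely).

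Next I would handle (iii) $\Leftrightarrow$ (ii). For (iii) $\Rightarrow$ (ii): take $\Lambda$ with $|\Lambda| \ge \pp - \qq$ and suppose $\Phi_{\Lambda^\nn}^\pi x = 0$ for some $x \neq 0_{\nn\times 1}$; then $\Phi x$ has zero $\ii$-th block for every $\ii \in \Lambda$, so $\supp^\nn(\Phi x) \subseteq \Lambda^c$, giving $\|\Phi x\|_{0^\nn} \le |\Lambda^c| \le \qq$, contradicting (iii). For (ii) $\Rightarrow$ (iii): given $x \neq 0_{\nn\times 1}$, let $\Lambda := [\pp] \setminus \supp^\nn(\Phi x)$, the index set of blocks where $\Phi x$ vanishes. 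If $\|\Phi x\|_{0^\nn} \le \qq$ then $|\Lambda| \ge \pp - \qq$, and by construction $\Phi_{\Lambda^\nn}^\pi x = 0$ with $x \neq 0$, contradicting the full-column-rank hypothesis of (ii). The equivalence of $\Phi_{\Lambda^\nn}$ having full column rank with $\Phi_{\Lambda^\nn}^\pi$ having full column rank is immediate since the two matrices have the same nonzero rows, hence the same null space; I would state this in one sentence rather than belabor it.

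I do not anticipate a genuine obstacle here — the proof is a routine dictionary translation between sparsity of $\Phi x$, vanishing of output blocks, and null spaces of restricted matrices — but the one point demanding care is the bookkeeping with the stacked index notation: making sure that "the $\ii$-th block of $\Phi x$ is zero for all $\ii \in \Lambda$" is correctly identified with "$\Phi_{\Lambda^\nn}^\pi x = 0$", i.e., that restricting rows by the lifted index set $\Lambda^\nn = \bigcup_{\ii\in\Lambda}\Gamma_\ii^\nn$ exactly picks out those blocks. Once that identification is in place, each implication is a two-line argument via contraposition, and the cycle closes. I would present the argument in the order (i) $\Leftrightarrow$ (iii), then (iii) $\Leftrightarrow$ (ii), since that grouping is the most transparent, even though a single three-step cycle would also suffice.
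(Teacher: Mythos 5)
Your proposal is correct and rests on the same observations as the paper's proof (detectability fails iff $\Phi w \in \Sigma_\qq^\nn$ for some $w \neq 0$, and the support-complement construction linking sparsity of $\Phi x$ to rank deficiency of $\Phi_{\Lambda^\nn}$); the only difference is that you organize the argument as (i)$\Leftrightarrow$(iii) and (iii)$\Leftrightarrow$(ii) while the paper runs the cycle (i)$\Rightarrow$(ii)$\Rightarrow$(iii)$\Rightarrow$(i). This is a cosmetic reordering, not a different route.
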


\begin{proof}
	(i) $\Rightarrow$ (ii): Suppose that (ii) does not hold, i.e., there exists an index set $\Lambda \subset [\pp]$ with $|\Lambda| \ge \pp-\qq$ and $x \neq 0_{\nn \times 1}$ such that $\Phi_{\Lambda^\nn} x = 0_{\np \times 1}$.
	Then it follows that $\|e\|_{0^\nn} \le \qq$ where $e := -\Phi x$. 
	Thus, $\Phi x + e = \Phi 0_{\nn \times 1}$, and $\Phi$ is not $\qq$-error detectable.
	
	\noindent
	(ii) $\Rightarrow$ (iii):  Suppose, for the sake of contradiction, that there exists $x \neq 0_{\nn \times 1}$ such that $\|\Phi x \|_{0^\nn} \le \qq$. 
	Let $\Lambda$ be the complement of $\supp^\nn(\Phi x)$, i.e., $\Lambda = \left(\supp^\nn(\Phi x)\right)^c$. 
	Then it is obvious that $|\Lambda| \ge \pp-\qq$ and $\Phi_{\Lambda^\nn} x = 0_{\np \times 1}$. 
	This contradicts the full column rank condition of $\Phi_{\Lambda^\nn}$ in (ii).
	
	\noindent
	(iii) $\Rightarrow$ (i): We again prove it by contradiction. Suppose that $\Phi$ is not $\qq$-error detectable. That is, there exist $x, x' \in \R^\nn$ satisfying $x \neq x'$, and $e \in \Sigma_\qq^\nn$ such that $\Phi x + e = \Phi x'$. It follows from $x'-x\neq 0_{\nn \times 1}$ and $e\in \Sigma_{\qq}^\nn$ that $\|\Phi (x' - x)\|_{0^\nn} = \|e\|_{0^\nn} \le \qq$.
	Thus, condition (iii) does not hold.
\end{proof}

\begin{rem} \label{rem:redun_detect}
	In Proposition \ref{lem:detec_equiv}, condition (ii) relates $\qq$-error detectability to the left invertibility of $\Phi$.
	That is, $\Phi$ remains left invertible even if any ($\nn$-stacked) $\qq$ row blocks are eliminated. 
	We may call this property {\em $\qq$-redundant left invertibility}.
	On the other hand, condition (iii) establishes the link between the error detectability and the cospark of a coding matrix.
	More specifically, $\Phi$ is $\qq$-error detectable if and only if its cospark is larger than $\qq$, i.e., $\cospark^\nn(\Phi) > \qq$.
\end{rem}

The equivalence conditions in Proposition \ref{lem:detec_equiv} lead to a criterion of $\qq$-sparse error detection based on a residual signal  
\begin{equation} \label{eq:residual}
r := \hat z -  \Phi \Phi^\dagger \hat z = \left(I_{\np\times\np} -\Phi (\Phi^\top \Phi)^{-1}\Phi^\top \right) \hat z .
\end{equation}

\begin{lem} \label{thm:Detec_noiseless} 
	For the measurement $\hat z = \Phi x + e$ where $\Phi \in \R^{{\np}\times \nn}$ is ($\nn$-stacked) $\qq$-error detectable, $x \in \R^\nn$, and $e \in \Sigma_\qq^\nn$, let $r = \hat z - \Phi \Phi^{\dagger} \hat z$.
	Then $e = 0_{\np\times 1}$ if and only if $r = 0_{\np\times 1}$.
	Moreover, when $e = 0_{\np \times 1}$, the vector $x$ is recovered by $\hat x := \Phi^{\dagger} \hat z$.
\end{lem}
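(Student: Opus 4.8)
The plan is to exploit that $\Phi$ has full column rank, which follows from Proposition~\ref{lem:detec_equiv}(ii) applied with $\Lambda = [\pp]$: since $\qq$-error detectability is equivalent to $\Phi_{\Lambda^\nn}$ having full column rank for every $\Lambda$ with $|\Lambda| \ge \pp - \qq$, taking $\Lambda = [\pp]$ forces $\Phi_{[\pp]^\nn} = \Phi$ itself to be left invertible. Consequently $\Phi^\top\Phi$ is invertible, $\Phi^\dagger = (\Phi^\top\Phi)^{-1}\Phi^\top$ is well defined, and $P := \Phi\Phi^\dagger$ is the orthogonal projector onto $\RR(\Phi)$. The two facts I will use are: $Pw = w$ for every $w \in \RR(\Phi)$, and $r = (I_{\np\times\np} - P)\hat z = 0_{\np\times 1}$ if and only if $\hat z \in \RR(\Phi)$.

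For the ``only if'' direction I would argue directly: if $e = 0_{\np\times 1}$, then $\hat z = \Phi x \in \RR(\Phi)$, so $P\hat z = \hat z$ and hence $r = \hat z - P\hat z = 0_{\np\times 1}$. The same computation gives the recovery claim, since $\hat x = \Phi^\dagger\hat z = (\Phi^\top\Phi)^{-1}\Phi^\top\Phi x = x$.

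For the ``if'' direction, suppose $r = 0_{\np\times 1}$, i.e. $\hat z = P\hat z = \Phi(\Phi^\dagger\hat z)$; set $x' := \Phi^\dagger\hat z \in \R^\nn$ so that $\hat z = \Phi x'$. Combining with the model $\hat z = \Phi x + e$ yields $\Phi x + e = \Phi x'$ with $e \in \Sigma_\qq^\nn$. Now invoke $\qq$-error detectability of $\Phi$ via Definition~\ref{defn:detect} to conclude $x = x'$, and therefore $e = \Phi x' - \Phi x = 0_{\np\times 1}$.

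I do not expect a genuine obstacle here: the argument is essentially a one-line application of the defining property of $\qq$-error detectability plus the idempotence of the projector onto $\RR(\Phi)$. The only step warranting a moment's care is justifying the closed form of $\Phi^\dagger$ (equivalently, that $P$ is a true projector onto $\RR(\Phi)$), which is exactly where full column rank of $\Phi$—itself a consequence of $\qq$-error detectability—is needed.
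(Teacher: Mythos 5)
Your proof is correct and follows essentially the same route as the paper: both hinge on $\Phi\Phi^{\dagger}$ being the orthogonal projector onto $\RR(\Phi)$, so that $r=0_{\np\times 1}$ iff $\hat z\in\RR(\Phi)$, and on error detectability ruling out a nonzero $\qq$-sparse $e$ consistent with that membership. The only cosmetic difference is that you invoke Definition~\ref{defn:detect} directly for the ``if'' direction (and explicitly justify full column rank of $\Phi$ via $\Lambda=[\pp]$), whereas the paper appeals to Proposition~\ref{lem:detec_equiv}.(iii) to say a nonzero sparse error cannot lie in $\RR(\Phi)$ --- these are equivalent by that proposition.
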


\begin{proof}
	Note that any non-zero $\qq$-sparse error $e$ does not lie in $\RR(\Phi)$ by Proposition \ref{lem:detec_equiv}.(iii).
	Hence, $e \not = 0_{\np \times 1}$ is equivalent to the condition that $\hat z  = \Phi x + e \notin \RR(\Phi)$.
	Since $\Phi \Phi^\dagger$ is a projection matrix and it projects $\hat z$ onto $\RR(\Phi)$, we have $\hat z \notin \RR(\Phi)$ if and only if $\hat z \not =  \Phi \Phi^\dagger \hat z$.
	This completes the proof.
\end{proof}

Inspired by the error detection scheme for the noiseless case of Lemma \ref{thm:Detec_noiseless}, let us now consider a scheme for the case when the bounded noise $v\in\R^\np$ corrupts the measurements. 
For this, let
\begin{align*}
\begin{split}
\rho_{\pp,\qq}(\Phi) &:= \min \left\{ \sigma_{\min}\left(\Phi_{\Lambda^\nn}\right) : \Lambda \subset [\pp],~|\Lambda| = \pp-\qq \right\},
\end{split}\\
\begin{split}
\eta_{\pp,\qq} (\Phi) &:= \max \Big\{ \big\| \Phi_{\Gamma_\ii^\nn} \left(\Phi_{\Lambda^\nn}\right)^{\dagger} \big\|_2 : \ii \in [\pp]\setminus\Lambda, \\
&\qquad \qquad \qquad \qquad \qquad \quad~ \Lambda \subset [\pp],~|\Lambda| = \pp-\qq \Big\}, 
\end{split}\\
\begin{split}
\kappa_{\pp,\qq}^d (\Phi) &:= (\sqrt{\pp}+1)\sqrt{\pp-\qq}/{\rho_{\pp,\qq}(\Phi)},
\end{split}\\
\begin{split}
\kappa_{\pp,\qq}^e (\Phi) &:= \left(\eta_{\pp,\qq}(\Phi) \sqrt{\pp-\qq} +1 \right)(\sqrt{\pp}+1).
\end{split}
\end{align*}
Then, the following theorem says that one can ``practically'' detect the $\qq$-sparse error in the noisy situation with the residual $r$ given in \eqref{eq:residual}.

\begin{thm} \label{thm:detec_thmv}
	For the measurement $\hat z = \Phi x + v + e$ where $\Phi \in \R^{{\np}\times \nn}$ is ($\nn$-stacked) $\qq$-error detectable, $x \in \R^\nn$, $e \in \Sigma_\qq^\nn$, and $v \in \R^\np$ satisfying $\|v_\ii^\nn\|_2 \le v_{\max}$, $^\forall \ii \in [\pp]$, let $\hat x = \Phi^{\dagger} \hat z$ and $r = \hat z - \Phi \hat x$.
	Then: 
	
	\noindent
	(i) $e \neq 0_{\nn\pp \times 1}$ if
	$$\|r_\ii^\nn\|_2 = \|\hat z_\ii^\nn-\Phi_{\Gamma_\ii^\nn}^\pi \hat x \|_2  > \sqrt{\pp}\:v_{\max} \quad \text{for some $\ii \in [\pp]$;}$$
	\noindent
	(ii) $\|e_\ii^\nn\|_2 \le \kappa_{\pp,\qq}^e (\Phi) v_{\max}$, $^\forall \ii \in [\pp]$, if
	$$\|r_\ii^\nn\|_2 = \|\hat z_\ii^\nn-\Phi_{\Gamma_\ii^\nn}^\pi \hat x \|_2  \le \sqrt{\pp}\:v_{\max} \quad \text{for all $\ii \in [\pp]$.}$$
	In the case of (ii), $\|\hat x - x\|_2 \le \kappa_{\pp,\qq}^d (\Phi) v_{\max}$.
\end{thm}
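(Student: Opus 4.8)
The plan is to dispatch part~(i) by contraposition and part~(ii) by restricting the residual identity to the unattacked row blocks and inverting there via the full column rank supplied by Proposition~\ref{lem:detec_equiv}.

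For part~(i), I would assume $e = 0_{\np \times 1}$ and show $\|r_\ii^\nn\|_2 \le \sqrt{\pp}\,v_{\max}$ for every $\ii$. Since $\Phi\Phi^\dagger\Phi = \Phi$, the residual collapses to $r = (I_{\np\times\np} - \Phi\Phi^\dagger)(\Phi x + v) = (I_{\np\times\np} - \Phi\Phi^\dagger)v$, and as $I_{\np\times\np} - \Phi\Phi^\dagger$ is an orthogonal projector we get $\|r\|_2 \le \|v\|_2$. Writing $\|v\|_2^2 = \sum_{\ii\in[\pp]}\|v_\ii^\nn\|_2^2 \le \pp\,v_{\max}^2$ then gives $\|r_\ii^\nn\|_2 \le \|r\|_2 \le \sqrt{\pp}\,v_{\max}$, and the contrapositive is exactly the assertion in~(i).

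For part~(ii), the starting point is the identity $e = r + \Phi(\hat x - x) - v$, obtained by combining $\hat z = \Phi x + v + e$ with $r = \hat z - \Phi\hat x$. Let $\Lambda_e := \supp^\nn(e)$, so $|\Lambda_e| \le \qq$, and fix any $\Lambda \subseteq \Lambda_e^c$ with $|\Lambda| = \pp - \qq$ (possible since $|\Lambda_e^c| = \pp - |\Lambda_e| \ge \pp - \qq$). Restricting the identity to the row blocks indexed by $\Lambda$, on which $e$ vanishes, leaves $\Phi_{\Lambda^\nn}^\pi(\hat x - x) = v_{\Lambda^\nn}^\pi - r_{\Lambda^\nn}^\pi$. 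By Proposition~\ref{lem:detec_equiv}(ii), $\Phi_{\Lambda^\nn}^\pi$ has full column rank, hence $\sigma_{\min}(\Phi_{\Lambda^\nn}^\pi) \ge \rho_{\pp,\qq}(\Phi) > 0$ (zero-padding does not change the singular values), so $\hat x - x = (\Phi_{\Lambda^\nn}^\pi)^\dagger(v_{\Lambda^\nn}^\pi - r_{\Lambda^\nn}^\pi)$. The block bounds $\|v_\ii^\nn\|_2 \le v_{\max}$ and the hypothesis $\|r_\ii^\nn\|_2 \le \sqrt{\pp}\,v_{\max}$ give $\|v_{\Lambda^\nn}^\pi\|_2 \le \sqrt{\pp-\qq}\,v_{\max}$ and $\|r_{\Lambda^\nn}^\pi\|_2 \le \sqrt{\pp(\pp-\qq)}\,v_{\max}$, so $\|\hat x - x\|_2 \le (\|v_{\Lambda^\nn}^\pi\|_2 + \|r_{\Lambda^\nn}^\pi\|_2)/\rho_{\pp,\qq}(\Phi)$, which factors to $\kappa_{\pp,\qq}^d(\Phi)\,v_{\max}$. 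For the error estimate, $e_\ii^\nn = 0$ whenever $\ii \in \Lambda$; for $\ii \notin \Lambda$ I would substitute the expression for $\hat x - x$ into $e_\ii^\nn = r_\ii^\nn + \Phi_{\Gamma_\ii^\nn}^\pi(\hat x - x) - v_\ii^\nn$, take norms, use $\|\Phi_{\Gamma_\ii^\nn}^\pi(\Phi_{\Lambda^\nn}^\pi)^\dagger\|_2 = \|\Phi_{\Gamma_\ii^\nn}(\Phi_{\Lambda^\nn})^\dagger\|_2 \le \eta_{\pp,\qq}(\Phi)$ (valid since $\ii \in [\pp]\setminus\Lambda$ and $|\Lambda| = \pp-\qq$), and group terms; this yields $\|e_\ii^\nn\|_2 \le (\eta_{\pp,\qq}(\Phi)\sqrt{\pp-\qq}+1)(\sqrt{\pp}+1)\,v_{\max} = \kappa_{\pp,\qq}^e(\Phi)\,v_{\max}$.

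The main obstacle is conceptual rather than computational: one must notice that although $\hat x = \Phi^\dagger\hat z$ is built from all rows (so $\hat x - x$ genuinely depends on the arbitrarily large $e$), this same $\hat x$ still satisfies the clean-row equation above, and the full column rank of $\Phi_{\Lambda^\nn}^\pi$ then pins $\hat x - x$ down in terms of bounded quantities only; localizing the argument onto the unattacked blocks is the crux. The remaining care is purely bookkeeping: keeping the zero-padded objects ($\Phi_{\Lambda^\nn}$, $r_{\Lambda^\nn}$) consistent with their compressed counterparts ($\Phi_{\Lambda^\nn}^\pi$, $v_{\Lambda^\nn}^\pi$, etc.) and verifying that the block sparsity pattern makes $\|\Phi_{\Gamma_\ii^\nn}^\pi(\Phi_{\Lambda^\nn}^\pi)^\dagger\|_2 = \|\Phi_{\Gamma_\ii^\nn}(\Phi_{\Lambda^\nn})^\dagger\|_2$, so that the defining expression for $\eta_{\pp,\qq}(\Phi)$ applies verbatim. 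With these in place, the constants $\kappa_{\pp,\qq}^d(\Phi)$ and $\kappa_{\pp,\qq}^e(\Phi)$ drop out by collecting terms.
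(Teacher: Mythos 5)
Your proof is correct and follows essentially the same route as the paper's: contraposition via the orthogonal projector $I_{\np\times\np}-\Phi\Phi^{\dagger}$ for part (i), and for part (ii) restriction to an unattacked index set $\Lambda\subset\left(\supp^\nn(e)\right)^c$ with $|\Lambda|=\pp-\qq$, inversion there using $\rho_{\pp,\qq}(\Phi)$, and the factorization through $\left(\Phi_{\Lambda^\nn}\right)^{\dagger}$ to invoke $\eta_{\pp,\qq}(\Phi)$. The only difference is presentational (you solve the clean-row identity explicitly with the pseudoinverse where the paper bounds $\|\Phi_{\Lambda^\nn}(x-\hat x)\|_2$ by the triangle inequality first), and the constants come out identical.
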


\begin{proof}		
(i): This can be proved by contraposition.
If $e = 0_{\nn\pp \times 1}$, then we have, for all $\ii \in [\pp]$,
\begin{align*}
\begin{split}
&\|\hat z_\ii^\nn-\Phi_{\Gamma_\ii^\nn}^\pi \hat x \|_2  = \|\Phi_{\Gamma_\ii^\nn}^\pi x + v_\ii^\nn -\Phi_{\Gamma_\ii^\nn}^\pi (\Phi^{\dagger} (\Phi x + v)) \|_2 \\
&= \| v_\ii^\nn -\Phi_{\Gamma_\ii^\nn}^\pi (\Phi^{\dagger} v) \|_2\le \| (I_{\np\times\np} -\Phi \Phi^{\dagger}) v \|_2 \le \sqrt{\pp}\:v_{\max}, \\
\end{split}
\end{align*}
which follows from the fact that $\| I_{\np \times \np } - \Phi \Phi^{\dagger} \|_2 \le 1$.

\noindent
(ii): Let $\Lambda$ be a subset of $\left(\supp^\nn(e)\right)^c$ satisfying $|\Lambda| = \pp-\qq$. 
Since $\sqrt{\pp}\:v_{\max} \ge \|\hat z_\ii^\nn-\Phi_{\Gamma_\ii^\nn}^\pi \hat x \|_2$ for all $\ii\in\Lambda$ from the assumption, we have
\begin{align*}
\begin{split}
&\sqrt{\pp(\pp-\qq)}\:v_{\max} \ge \|\hat z_{\Lambda^\nn}-\Phi_{\Lambda^\nn} \hat x \|_2  = \|\Phi_{\Lambda^\nn} x + v_{\Lambda^\nn} -\Phi_{\Lambda^\nn} \hat x \|_2 \\
&= \| \Phi_{\Lambda^\nn} (x - \hat x) +v_{\Lambda^\nn} \|_2 
\ge \| \Phi_{\Lambda^\nn} (x - \hat x)\|_2 - \|v_{\Lambda^\nn}\|_2,
\end{split}
\end{align*}
which leads to the result that 
$$\| \Phi_{\Lambda^\nn} (x - \hat x)\|_2 \le (\sqrt{\pp}+1)\sqrt{\pp-\qq}\: v_{\max}.$$
Therefore, it is obtained that
$$\|\hat x - x\|_2 \le (\sqrt{\pp}+1)\sqrt{\pp-\qq}\:v_{\max}/{\rho_{\pp,\qq}(\Phi)}  = \kappa_{\pp,\qq}^d (\Phi) v_{\max}.$$
Now, for any $\ii\in\Lambda^c$, it follows again from the assumption that
\begin{align*}
\begin{split}
\sqrt{\pp}\: v_{\max} & \ge \|\hat z_\ii^\nn-\Phi_{\Gamma_\ii^\nn}^\pi \hat x \|_2  = \|\Phi_{\Gamma_\ii^\nn}^\pi x + v_\ii^\nn + e_\ii^\nn -\Phi_{\Gamma_\ii^\nn}^\pi \hat x \|_2 \\
&= \| \Phi_{\Gamma_\ii^\nn}^\pi (x - \hat x) +v_\ii^\nn + e_\ii^\nn\|_2 \\
&\ge -\| \Phi_{\Gamma_\ii^\nn}^\pi (x - \hat x)\|_2 - v_{\max} + \|e_\ii^\nn\|_2 .
\end{split}
\end{align*}
Hence, 
\begin{align*}
\begin{split}
\|e_\ii^\nn\|_2 &\le \| \Phi_{\Gamma_\ii^\nn}^\pi (x - \hat x)\|_2 + \sqrt{\pp}\:v_{\max} + v_{\max}\\
&= \| \Phi_{\Gamma_\ii^\nn} \Phi_{\Lambda^\nn}^{\dagger} \Phi_{\Lambda^\nn}(x - \hat x)\|_2 + (\sqrt{\pp}+1) v_{\max}\\
&\le\eta_{\pp,\qq}(\Phi) (\sqrt{\pp}+1) \sqrt{\pp-\qq}\:v_{\max} + (\sqrt{\pp}+1) v_{\max}\\
&= \kappa_{\pp,\qq}^e (\Phi) v_{\max}, \quad ^\forall \ii\in\Lambda^c.
\end{split}
\end{align*}
Since $\|e_\ii^\nn\|_2 = 0 $ for all $\ii\in\Lambda$, this completes the proof.
\end{proof}

In fact, when the magnitude of $e$ is small, one cannot differentiate between the noise $v$ and the error $e$.
Theorem~\ref{thm:detec_thmv}.(ii) reflects this fact and guarantees that the estimation error is small and $\hat x$ approximately estimates $x$.

\subsection{Error Correctability and Reconstruction Scheme}

In the noiseless case, the following notion of {\em error correctability} is introduced and characterized in this subsection.

\begin{defn} \label{defn:correct}
	A coding matrix $\Phi \in \R^{{\np}\times \nn}$ is said to be {\em ($\nn$-stacked) $\qq$-error correctable} if, for all $x_1, x_2 \in \R^\nn$ and $e_1, e_2 \in \Sigma_\qq^\nn$ such that $\Phi x_1 + e_1 = \Phi x_2 + e_2$, it holds that $x_1 = x_2$.
\end{defn}

Now, one can easily obtain the following equivalence between the error correctability and the error detectability.
The following proposition implies that one can detect twice the number of errors that can be corrected and reconstructed.

\begin{prop} \label{lem:corr_equiv}
	The following are equivalent:\\
	(i) the matrix $\Phi \in \R^{{\np}\times \nn}$ is ($\nn$-stacked) $\qq$-error correctable;\\
	(ii) the matrix $\Phi \in \R^{{\np}\times \nn}$ is ($\nn$-stacked) $2\qq$-error detectable.
\end{prop}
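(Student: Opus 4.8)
The plan is to reduce both notions to the cospark characterization in Proposition~\ref{lem:detec_equiv}.(iii), by way of the elementary set identity
\[
\bigl\{\, e_1 - e_2 : e_1, e_2 \in \Sigma_\qq^\nn \,\bigr\} = \Sigma_{2\qq}^\nn .
\]
The inclusion ``$\subseteq$'' holds because $\supp^\nn(e_1 - e_2) \subseteq \supp^\nn(e_1) \cup \supp^\nn(e_2)$, so $\|e_1 - e_2\|_{0^\nn} \le \|e_1\|_{0^\nn} + \|e_2\|_{0^\nn} \le 2\qq$. For ``$\supseteq$'', given $e \in \Sigma_{2\qq}^\nn$ one partitions the index set $\supp^\nn(e) \subseteq [\pp]$ into $S_1 \cup S_2$ with $|S_1|, |S_2| \le \qq$, and sets $e_1 := e_{S_1^\nn}$ and $e_2 := -e_{S_2^\nn}$; then $e_1, e_2 \in \Sigma_\qq^\nn$ and $e_1 - e_2 = e$.

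With this in hand, I would argue the equivalence directly from the definitions. For (ii)~$\Rightarrow$~(i), assume $\Phi$ is $2\qq$-error detectable and suppose $\Phi x_1 + e_1 = \Phi x_2 + e_2$ with $e_1, e_2 \in \Sigma_\qq^\nn$; setting $e := e_1 - e_2 \in \Sigma_{2\qq}^\nn$ gives $\Phi x_1 + e = \Phi x_2$, hence $x_1 = x_2$ by Definition~\ref{defn:detect} applied with parameter $2\qq$, so $\Phi$ is $\qq$-error correctable. For (i)~$\Rightarrow$~(ii) I would use contraposition: if $\Phi$ is not $2\qq$-error detectable, there are $x \ne x'$ and $e \in \Sigma_{2\qq}^\nn$ with $\Phi x + e = \Phi x'$; writing $e = e_1 - e_2$ with $e_1, e_2 \in \Sigma_\qq^\nn$ as above gives $\Phi x + e_1 = \Phi x' + e_2$ with $x \ne x'$, so $\Phi$ fails to be $\qq$-error correctable.

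There is no real obstacle here; the only point deserving a line of justification is the ``$\supseteq$'' half of the displayed identity, i.e. that an ($\nn$-stacked) $2\qq$-sparse vector splits into two ($\nn$-stacked) $\qq$-sparse summands, which is merely a partition of its stacked support. Equivalently, one could phrase the entire argument through Proposition~\ref{lem:detec_equiv}.(iii): $\Phi$ is $\qq$-error correctable iff $\Phi(x_1 - x_2) \notin \{e_1 - e_2 : e_1, e_2 \in \Sigma_\qq^\nn\} = \Sigma_{2\qq}^\nn$ for all $x_1 \ne x_2$, i.e. iff $\|\Phi x\|_{0^\nn} > 2\qq$ for every nonzero $x$, which is exactly the condition for $2\qq$-error detectability.
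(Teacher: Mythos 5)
Your proof is correct and follows essentially the same route as the paper: both directions hinge on writing a ($\nn$-stacked) $2\qq$-sparse error as the difference $e_1 - e_2$ of two $\qq$-sparse errors and invoking the definitions. The only difference is that you explicitly justify the splitting of $e \in \Sigma_{2\qq}^\nn$ by partitioning its stacked support, a step the paper states without proof.
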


\begin{proof}
	(i) $\Rightarrow$ (ii): Assume that $x, x' \in \R^\nn$ and $e \in \Sigma_{2\qq}^\nn$ satisfying $\Phi x + e = \Phi x'$ are given.
	Let $e_1$ and $e_2$ be such that $e = e_1 - e_2$ where  $e_1, e_2 \in \Sigma_\qq^\nn$.
	Thus, we have $\Phi x + e_1 = \Phi x' + e_2$. 
	Since $\Phi \in \R^{{\np}\times \nn}$ is $\qq$-error correctable, it follows that $x =x'$. 
	
	\noindent
	(ii) $\Rightarrow$ (i):  Assume that $x_1, x_2 \in \R^\nn$ and $e_1, e_2 \in \Sigma_\qq^\nn$ satisfying $\Phi x_1 + e_1 = \Phi x_2 + e_2$ are given.
	Then, we have $\Phi x_1 + e = \Phi x_2$ where $e = e_1 - e_2 \in \Sigma_{2\qq}^\nn$. 
	Since $\Phi \in \R^{{\np}\times \nn}$ is $2\qq$-error detectable, it follows that $x_1 = x_2$. 
\end{proof}

Based on the notion of $\qq$-error correctability, we discuss the problem of constructing a decoder that can actually correct ($\nn$-stacked) $\qq$ errors and recover the original state $x$  when $v=0_{\np \times 1}$ in \eqref{eq:static_y}. 
That is, we find a map $\DD : \R^{\np} \rightarrow \R^\nn$ such that $\DD(\hat z)=x$ where $\hat z = \Phi x + e \in \R^\np$ and $e \in \Sigma_\qq^\nn$.
This is basically achieved through $\ell_0$ minimization \cite[Section 3]{Guruswami08}. 
Here we claim that searching over a finite set is enough to solve the minimization problem.

\begin{thm} \label{thm:finite_opt}
	For the measurement $\hat z = \Phi x + e \in \R^{\np}$ with ($\nn$-stacked) $\qq$-error correctable $\Phi \in \R^{{\np}\times \nn}$, $x \in \R^\nn$, and $e \in \Sigma_\qq^\nn$, it follows that 
	\begin{align}
	x &= \argmin_{\chi\in\mathcal{F}_{\pp,\rr}(\hat z)} \|\hat z-\Phi \chi\|_{0^\nn} \label{eq:nonoise}\\
	& = \argmin_{\chi\in\mathcal{F}_{\pp,\rr}(\hat z)} \big| \big\{ \ii \in [\pp] : \|\hat z_\ii^\nn-\Phi_{\Gamma_\ii^\nn}^\pi \chi \|_2  > 0 \big\} \big|, \tag{\ref{eq:nonoise}$'$}\label{eq:nonoise'}
	\end{align}
	where 
	\begin{align*}
	\begin{split}
	\mathcal{F}_{\pp,\rr}(\hat z) := \big\{ \left(\Phi_{\Lambda^\nn}\right)^{\dagger} \hat z_{\Lambda^\nn} \in \R^\nn : \Lambda \subset [\pp],~ |\Lambda| = \pp-\rr \big\}
	\end{split}
	\end{align*}
	and $\rr$ is any integer satisfying $\qq \le \rr \le 2\qq$.
\end{thm}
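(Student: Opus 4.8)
The plan is to establish three facts: first, that the true state $x$ lies in the feasible set $\mathcal{F}_{\pp,\rr}(\hat z)$; second, that $x$ attains objective value at most $\qq$; and third, that every feasible $\chi\neq x$ attains objective value at least $\qq+1$. Together these force $x$ to be the unique minimizer in \eqref{eq:nonoise}, and \eqref{eq:nonoise'} then follows because the two objective functions are, by the definitions of $\supp^\nn$ and $\|\cdot\|_{0^\nn}$ together with the block decomposition $(\hat z-\Phi\chi)_\ii^\nn=\hat z_\ii^\nn-\Phi_{\Gamma_\ii^\nn}^\pi\chi$, literally the same number.

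For feasibility of $x$: since $e\in\Sigma_\qq^\nn$ we have $|\supp^\nn(e)|\le\qq$, so $|(\supp^\nn(e))^c|\ge\pp-\qq\ge\pp-\rr$ because $\rr\ge\qq$; hence we may pick $\Lambda\subset(\supp^\nn(e))^c$ with $|\Lambda|=\pp-\rr$, and for such $\Lambda$ the rows of $e$ indexed by $\Lambda^\nn$ vanish, so $\hat z_{\Lambda^\nn}=\Phi_{\Lambda^\nn}x$. Next, $\qq$-error correctability of $\Phi$ gives, through Proposition~\ref{lem:corr_equiv}, that $\Phi$ is $2\qq$-error detectable, and then Proposition~\ref{lem:detec_equiv}.(ii) (used with $\qq$ replaced by $2\qq$) applied to $\Lambda$, which satisfies $|\Lambda|=\pp-\rr\ge\pp-2\qq$, shows that $\Phi_{\Lambda^\nn}$ has full column rank. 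Consequently $(\Phi_{\Lambda^\nn})^\dagger\Phi_{\Lambda^\nn}=I_{\nn\times\nn}$, so $(\Phi_{\Lambda^\nn})^\dagger\hat z_{\Lambda^\nn}=x$, i.e., $x\in\mathcal{F}_{\pp,\rr}(\hat z)$. The second fact is immediate: $\hat z-\Phi x=e$, hence $\|\hat z-\Phi x\|_{0^\nn}=\|e\|_{0^\nn}\le\qq$.

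For the separation of any feasible $\chi\neq x$: write $\hat z-\Phi\chi=\Phi(x-\chi)+e$ and use the elementary inclusion $\supp^\nn(a+b)\supseteq\supp^\nn(a)\setminus\supp^\nn(b)$, which yields the block-wise triangle inequality $\|\hat z-\Phi\chi\|_{0^\nn}\ge\|\Phi(x-\chi)\|_{0^\nn}-\|e\|_{0^\nn}$. Since $x-\chi\neq0_{\nn\times1}$ and $\Phi$ is $2\qq$-error detectable, Proposition~\ref{lem:detec_equiv}.(iii) (again with $\qq$ replaced by $2\qq$) gives $\|\Phi(x-\chi)\|_{0^\nn}>2\qq$, so $\|\hat z-\Phi\chi\|_{0^\nn}\ge(2\qq+1)-\qq=\qq+1>\qq\ge\|\hat z-\Phi x\|_{0^\nn}$. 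Thus no $\chi\neq x$ in $\mathcal{F}_{\pp,\rr}(\hat z)$ can minimize the objective, which proves \eqref{eq:nonoise}; and \eqref{eq:nonoise'} follows by re-expressing $\|w\|_{0^\nn}$ as $\big|\{\ii\in[\pp]:\|w_\ii^\nn\|_2>0\}\big|$ with $w=\hat z-\Phi\chi$.

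The delicate point — the part I expect to need the most care — is the bookkeeping of sparsity levels: feasibility of $x$ must be argued using only $|\Lambda|=\pp-\rr$ together with $2\qq$-error detectability (which requires $\rr\le2\qq$ so that $\pp-\rr\ge\pp-2\qq$), whereas the strict separation uses the full strength of the equivalence $\qq$-error correctable $\Leftrightarrow$ $2\qq$-error detectable from Proposition~\ref{lem:corr_equiv}; and one must check that $(\supp^\nn(e))^c$ is large enough to contain a $\Lambda$ of size exactly $\pp-\rr$, which is precisely where $\qq\le\rr$ enters. Everything else reduces to monotonicity of $\Sigma_{(\cdot)}^\nn$ in its index, the identity $M^\dagger M=I$ for a full-column-rank $M$, and the block-wise $\ell_0$ triangle inequality noted above.
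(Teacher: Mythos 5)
Your proof is correct and follows essentially the same route as the paper's: both arguments first show $x\in\mathcal{F}_{\pp,\rr}(\hat z)$ by choosing $\Lambda\subset\left(\supp^\nn(e)\right)^c$ with $|\Lambda|=\pp-\rr$ and invoking the full column rank of $\Phi_{\Lambda^\nn}$ through Propositions \ref{lem:detec_equiv}.(ii) and \ref{lem:corr_equiv}, and both then exclude any other minimizer using $\qq$-error correctability. The only divergence is in that second step: the paper argues by contradiction directly from Definition \ref{defn:correct}, whereas you obtain the quantitative separation $\|\hat z-\Phi\chi\|_{0^\nn}\ge\qq+1$ for every feasible $\chi\neq x$ via Proposition \ref{lem:detec_equiv}.(iii) and an $\ell_0$ triangle inequality, which is a mild strengthening that makes uniqueness of the minimizer explicit.
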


\begin{proof}
	We first show that the vector $x$ belongs to $\mathcal{F}_{\pp,\rr}(\hat z)$. 
	Pick any subset $ \Lambda \subset \left(\supp^\nn(e)\right)^c$ satisfying $| \Lambda| = \pp-\rr$. 
	Because $\Phi_{\Lambda^\nn} $ has full column rank by Propositions \ref{lem:detec_equiv}.(ii) and \ref{lem:corr_equiv}, it follows that $\chi = \left( \Phi_{\Lambda^\nn} \right)^{\dagger} \hat z_{\Lambda^\nn} =  \left( \Phi_{\Lambda^\nn} \right)^{\dagger} \Phi_{\Lambda^\nn} x = x$.
	Hence, $x \in \mathcal{F}_{\pp,\rr}(\hat z)$.
	Now, it suffices to show that $x$ is a minimizer of $\|\hat z-\Phi \chi\|_{0^\nn}$. 
	Suppose, for the sake of contradiction, that there exists $x' \not = x$ in $\mathcal{F}_{\pp,\rr}(\hat z)$ that minimizes $\|\hat z-\Phi \chi\|_{0^\nn}$, then, with $e' := \hat z - \Phi x'$, we have that $\hat z = \Phi x' + e' = \Phi x + e$ and $\|e'\|_{0^\nn} \le \|e\|_{0^\nn} \le \qq$ because $e'$ is a minimal solution. 
	This contradicts the assumption that $\Phi$ is $\qq$-error correctable. 
\end{proof}

This theorem claims that it is enough to search over the finite set $\mathcal{F}_{\pp,\rr}(\hat z)$, not the whole space $\R^\nn$, to solve \eqref{eq:nonoise}. 
Keeping in mind the fact that $|\mathcal{F}_{\pp,\rr}(\hat z)| \le \binom{\pp}{\pp-\rr} = \binom{\pp}{\rr}$, one can choose any integer $\rr$ between $\qq$ and $2\qq$ to minimize $\binom{\pp}{\rr}$.

\begin{rem} \label{rem:nphard}
	The $\ell_0$ minimization problem over $\R^n$ is shown to be NP-hard \cite{Natarajan95}. 
	Whereas previous research efforts have been devoted to a relaxation of the problem by imposing some additional conditions (e.g., \cite{Fawzi14,Shouky14}), Theorem \ref{thm:finite_opt} actually relieves the computational complexity by reducing the search space to a {\em finite} set.
	It is a kind of combinatorial approach that tests only $\binom{\pp}{\rr} \le \pp^\rr$ (or $\binom{\pp}{\pp-\rr}\le \pp^{\pp-\rr}$) candidates with the freedom of selecting $\rr$ between $\qq$ and $2\qq$, whereas naive brute-force search algorithm without any information on error correctability has no choice but to test all $\binom{\pp}{1}+\binom{\pp}{2}+\cdots+\binom{\pp}{\pp} \approx 2^\pp$ combinations.
	In our case, the computational efforts decrease drastically by selecting $\rr=\qq$ when $\qq \ll \pp$ (or selecting $\rr=2\qq$ when $\qq \approx \pp/2$) for example. 
	Compared with other combinatorial algorithms in \cite{Pasqualetti13,Lee15,Shou15ACC}, Theorem \ref{thm:finite_opt} is more relaxed by introducing $\rr$ that can vary between $\qq$ and $2\qq$.
\end{rem}

Finally, the following lemma presents a simple criterion to verify whether a given vector $\hat x \in \R^\nn$ coincides with the original input $x$.

\begin{lem} \label{lem:suff_q}
	For the measurement $\hat z = \Phi x + e \in \R^{\np}$ with ($\nn$-stacked) $\qq$-error correctable $\Phi \in \R^{{\np}\times \nn}$, $x \in \R^\nn$, and $e \in \Sigma_\qq^\nn$, 
	$$\|\hat z-\Phi \hat x\|_{0^\nn} \le \qq \quad \text{if and only if} \quad x = \hat x.$$	
\end{lem}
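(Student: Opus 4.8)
The plan is to prove both directions of the equivalence using the error correctability of $\Phi$ together with the finite-search characterization already established. The ``if'' direction ($x = \hat x \Rightarrow \|\hat z - \Phi\hat x\|_{0^\nn} \le \qq$) is immediate: substituting $\hat x = x$ into the residual gives $\hat z - \Phi\hat x = \hat z - \Phi x = e$, and by hypothesis $e \in \Sigma_\qq^\nn$, so $\|\hat z - \Phi\hat x\|_{0^\nn} = \|e\|_{0^\nn} \le \qq$. This requires no real work and can be dispatched in one line.

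The substantive direction is ``only if'': assume $\|\hat z - \Phi\hat x\|_{0^\nn} \le \qq$ and deduce $\hat x = x$. The idea is to set $\hat e := \hat z - \Phi\hat x$, so that $\hat e \in \Sigma_\qq^\nn$ by assumption, and simultaneously we have the true decomposition $\hat z = \Phi x + e$ with $e \in \Sigma_\qq^\nn$. Equating the two expressions for $\hat z$ yields $\Phi\hat x + \hat e = \Phi x + e$ with both $\hat e, e \in \Sigma_\qq^\nn$. This is precisely the hypothesis of Definition~\ref{defn:correct} ($\qq$-error correctability), applied with $x_1 = \hat x$, $x_2 = x$, $e_1 = \hat e$, $e_2 = e$. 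Therefore $\hat x = x$, as desired.

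I do not anticipate a genuine obstacle here; the lemma is essentially a restatement of the definition of $\qq$-error correctability in terms of a computable residual criterion, and the proof is a direct unwinding. The only point requiring a moment's care is making sure the residual $\hat z - \Phi\hat x$ is genuinely $\qq$-sparse in the $\nn$-stacked sense (which is given) and that we are invoking correctability rather than mere detectability—detectability alone ($\qq$-error detectable) would only cover the case where one of the two errors is zero, whereas here both $\hat e$ and $e$ may be nonzero $\qq$-sparse vectors, so correctability is exactly the hypothesis we need. One could alternatively phrase the argument through Proposition~\ref{lem:corr_equiv} by noting $\hat e - e \in \Sigma_{2\qq}^\nn$ and $\Phi(\hat x - x) = e - \hat e$, then applying $2\qq$-error detectability to conclude $\hat x = x$; either route works and is equally short. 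I would present the direct version via Definition~\ref{defn:correct} for clarity.
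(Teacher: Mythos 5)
Your proof is correct and follows exactly the paper's own argument: the ``if'' direction by substituting $\hat x = x$ to get the residual $e \in \Sigma_\qq^\nn$, and the ``only if'' direction by defining $\hat e := \hat z - \Phi\hat x$, equating the two decompositions of $\hat z$, and invoking Definition~\ref{defn:correct} directly. Your closing remark on why correctability (rather than mere detectability) is needed is a useful clarification, but the route is the same.
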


\begin{proof}
	(if): This is trivial because $\|\hat z-\Phi \hat x\|_{0^\nn} = \|e\|_{0^\nn} \le \qq$.\\
	(only if): Define $\hat e := \hat z-\Phi \hat x$, then $\hat z=\Phi \hat x + \hat e = \Phi x + e$ where  $e, \hat e \in \Sigma_\qq^\nn$.
	Since $\Phi$ is $\qq$-error correctable, it follows from Definition \ref{defn:correct} that $x = \hat x$.
\end{proof}

Now, bounded noise $v \in \R^\np$ satisfying $\|v_\ii^\nn\|_2 \le v_{\max}$ for all $\ii \in [\pp]$ is taken into account and a state recovery scheme estimating $x$ is presented.
More precisely, we show that any solution $(\chi^*,\varepsilon^*)$ to the following relaxed $\ell_0$ minimization problem yields an approximation of $x$ as $\hat x = \chi^*$:
\begin{align} \label{eq:finite_opt_v}
\begin{split}
&\min_{\chi\in\mathcal{F}_{\pp,\rr}(\hat z),~\varepsilon\in\R^\np} \|\varepsilon\|_{0^\nn} \\
&~\!{\rm subject~to~} \|\hat z_\ii^\nn-\Phi_{\Gamma_\ii^\nn}^\pi \chi - \varepsilon_\ii^\nn\|_2 \le v_{\max}', ~~ {}^\forall \ii \in [\pp],
\end{split}
\end{align}
where $\rr$ is any integer satisfying $\qq \le \rr \le 2\qq$ and
\begin{align*} \label{eq:v'max}
\begin{split}
v_{\max}' &:= \vartheta_{\pp,\qq,\rr} (\Phi) v_{\max}\\
&:= \max \left\{ \eta_{\pp,\qq,\rr}'(\Phi)\sqrt{\pp-\rr} +1, \sqrt{\pp-\rr} \right\} v_{\max},
\\
\eta_{\pp,\qq,\rr}' (\Phi) &:= \max_{\substack{\Lambda \subset [\pp] \\ |\Lambda| = \pp-\qq}} ~\! \min_{\substack{\bar\Lambda \subset \Lambda \\ |\bar\Lambda| = \pp-\rr}} ~\! \max_{\ii \in \Lambda\setminus\bar\Lambda} \big\| \Phi_{\Gamma_\ii^\nn} \left(\Phi_{\bar\Lambda^\nn}\right)^{\dagger} \big\|_2. 
\end{split}
\end{align*}
The above optimization problem is not easily implementable because the variable $\varepsilon$ is searched over $\R^\np$ under constraints.
Hence, we present another optimization problem, which may be considered as a relaxation of \eqref{eq:nonoise'}:
\begin{equation} \tag{\ref{eq:finite_opt_v}$'$}\label{eq:finite_opt_v'}
\hat x = \argmin_{\chi\in\mathcal{F}_{\pp,\rr}(\hat z)} \big| \big\{ \ii \in [\pp] : \|\hat z_\ii^\nn-\Phi_{\Gamma_\ii^\nn}^\pi \chi \|_2  > v_{\max}'\big\} \big| .
\end{equation}
Whereas the problem \eqref{eq:finite_opt_v} or \eqref{eq:finite_opt_v'} need not have a unique solution, the following theorem shows equivalence between \eqref{eq:finite_opt_v} and \eqref{eq:finite_opt_v'}, and presents an upper bound of $\|\hat x - x\|_2$ for any solution $\hat x$ of \eqref{eq:finite_opt_v} or \eqref{eq:finite_opt_v'}.

\begin{thm} \label{thm:finite_thmv}
	For the measurement $\hat z = \Phi x + e + v \in \R^{\np}$ with ($\nn$-stacked) $\qq$-error correctable $\Phi \in \R^{{\np}\times \nn}$, $x \in \R^\nn$, $e \in \Sigma_\qq^\nn$, and $v \in \R^\np$ such that $\|v_\ii^\nn\|_2 \le v_{\max}$, $^\forall \ii \in [\pp]$, the following hold:
	
	\noindent
	(i) two optimization problems \eqref{eq:finite_opt_v} and \eqref{eq:finite_opt_v'} are equivalent (that is, a solution $\hat x$ to \eqref{eq:finite_opt_v} is also a solution to \eqref{eq:finite_opt_v'} and vice versa);
	
	\noindent
	(ii) for any solution $\hat x$, $\|\hat x - x\|_2 \le \kappa_{\pp,\qq,\rr}^c (\Phi)~v_{\max}$ where
	$$\kappa_{\pp,\qq,\rr}^c (\Phi) := (\vartheta_{\pp,\qq,\rr} (\Phi) +1) \sqrt{\pp-2\qq}/{\rho_{\pp,2\qq}(\Phi)}.$$
\end{thm}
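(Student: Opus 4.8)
The plan is to treat the two claims separately. For (i), I would fix $\chi \in \mathcal{F}_{\pp,\rr}(\hat z)$ and solve the inner minimization over $\varepsilon$ in \eqref{eq:finite_opt_v} in closed form. Writing $J(\chi) := \big\{ \ii \in [\pp] : \|\hat z_\ii^\nn - \Phi_{\Gamma_\ii^\nn}^\pi \chi\|_2 > v_{\max}'\big\}$, any $\varepsilon$ feasible for \eqref{eq:finite_opt_v} must have $\varepsilon_\ii^\nn \neq 0$ for each $\ii \in J(\chi)$ (else the $\ii$-th constraint is violated), so $\|\varepsilon\|_{0^\nn} \ge |J(\chi)|$; conversely, setting $\varepsilon_\ii^\nn = \hat z_\ii^\nn - \Phi_{\Gamma_\ii^\nn}^\pi\chi$ on $J(\chi)$ and $\varepsilon_\ii^\nn = 0$ elsewhere is feasible with $\|\varepsilon\|_{0^\nn} = |J(\chi)|$. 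Hence the inner minimum equals $|J(\chi)|$, which is precisely the objective of \eqref{eq:finite_opt_v'}; minimizing over $\chi \in \mathcal{F}_{\pp,\rr}(\hat z)$ then shows the two problems have the same set of optimal $\chi$, i.e.\ a solution to one is a solution to the other, giving (i).

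For (ii) the plan has two stages. Stage~1 is to show the optimal value of \eqref{eq:finite_opt_v'} is at most $\qq$ by exhibiting an explicit feasible point. Since $|\supp^\nn(e)| \le \qq \le \rr$, pick $\Lambda \subset (\supp^\nn(e))^c$ with $|\Lambda| = \pp-\qq$, and let $\bar\Lambda \subset \Lambda$, $|\bar\Lambda| = \pp-\rr$, attain the inner minimum in the definition of $\eta_{\pp,\qq,\rr}'(\Phi)$ for that $\Lambda$; set $\chi_0 := (\Phi_{\bar\Lambda^\nn})^{\dagger}\hat z_{\bar\Lambda^\nn} \in \mathcal{F}_{\pp,\rr}(\hat z)$. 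Because $\Phi$ is $2\qq$-error detectable (Proposition~\ref{lem:corr_equiv}) and $|\bar\Lambda| \ge \pp-2\qq$, the matrix $\Phi_{\bar\Lambda^\nn}$ has full column rank, and since $e$ vanishes on $\bar\Lambda$ this gives $\chi_0 = x + (\Phi_{\bar\Lambda^\nn})^{\dagger} v_{\bar\Lambda^\nn}$. I would then bound the $\ii$-th residual $\|\hat z_\ii^\nn - \Phi_{\Gamma_\ii^\nn}^\pi\chi_0\|_2$: for $\ii \in \bar\Lambda$ it is a sub-block of $(I - \Phi_{\bar\Lambda^\nn}(\Phi_{\bar\Lambda^\nn})^{\dagger})v_{\bar\Lambda^\nn}$, hence $\le \|v_{\bar\Lambda^\nn}\|_2 \le \sqrt{\pp-\rr}\,v_{\max}$; for $\ii \in \Lambda\setminus\bar\Lambda$ (where $e_\ii^\nn = 0$) it equals $v_\ii^\nn - \Phi_{\Gamma_\ii^\nn}^\pi(\Phi_{\bar\Lambda^\nn})^{\dagger} v_{\bar\Lambda^\nn}$, hence $\le \big(1 + \eta_{\pp,\qq,\rr}'(\Phi)\sqrt{\pp-\rr}\big)v_{\max}$ (using $\|\Phi_{\Gamma_\ii^\nn}^\pi M\|_2 = \|\Phi_{\Gamma_\ii^\nn} M\|_2$ and $\|v_{\bar\Lambda^\nn}\|_2 \le \sqrt{\pp-\rr}\,v_{\max}$). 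Both bounds are at most $v_{\max}' = \vartheta_{\pp,\qq,\rr}(\Phi)v_{\max}$ by the definition of $\vartheta_{\pp,\qq,\rr}$, so the residual can exceed $v_{\max}'$ only at indices outside $\Lambda$; therefore the objective of \eqref{eq:finite_opt_v'} at $\chi_0$ is $\le \pp - |\Lambda| = \qq$, whence every solution $\hat x$ also has objective $\le \qq$.

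Stage~2 converts this into the error bound, mirroring the proof of Theorem~\ref{thm:detec_thmv}. Objective $\le \qq$ at $\hat x$ means at least $\pp-\qq$ indices $\ii$ satisfy $\|\hat z_\ii^\nn - \Phi_{\Gamma_\ii^\nn}^\pi\hat x\|_2 \le v_{\max}'$; intersecting this set with $(\supp^\nn(e))^c$ (of cardinality $\ge \pp-\qq$) produces a set $S$ with $|S| \ge \pp-2\qq$ on which $e_\ii^\nn = 0$, so $\|\Phi_{\Gamma_\ii^\nn}^\pi(\hat x - x)\|_2 \le \|\hat z_\ii^\nn - \Phi_{\Gamma_\ii^\nn}^\pi\hat x\|_2 + \|v_\ii^\nn\|_2 \le v_{\max}' + v_{\max}$ for $\ii\in S$. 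Picking $\Lambda^* \subset S$ with $|\Lambda^*| = \pp-2\qq$ and summing over the $\nn$-blocks gives $\|\Phi_{(\Lambda^*)^\nn}(\hat x - x)\|_2 \le \sqrt{\pp-2\qq}\,(v_{\max}' + v_{\max})$; since $\Phi_{(\Lambda^*)^\nn}$ has full column rank with $\sigma_{\min}(\Phi_{(\Lambda^*)^\nn}) \ge \rho_{\pp,2\qq}(\Phi) > 0$ (again from $2\qq$-error detectability), and $v_{\max}' + v_{\max} = (\vartheta_{\pp,\qq,\rr}(\Phi)+1)v_{\max}$, I conclude $\|\hat x - x\|_2 \le \sqrt{\pp-2\qq}\,(\vartheta_{\pp,\qq,\rr}(\Phi)+1)v_{\max}/\rho_{\pp,2\qq}(\Phi) = \kappa_{\pp,\qq,\rr}^c(\Phi)\,v_{\max}$.

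I expect the main obstacle to be Stage~1: recognizing that the right feasible point is built from the inner-minimizing $\bar\Lambda$ in the definition of $\eta_{\pp,\qq,\rr}'(\Phi)$, and then splitting the residual estimate cleanly into the $\bar\Lambda$ part (a projection-residual bound) and the $\Lambda\setminus\bar\Lambda$ part (a pseudoinverse-norm bound) so that both stay below the threshold $v_{\max}'$, which is exactly what forces the particular form of $\vartheta_{\pp,\qq,\rr}$. Part (i), Stage~2, and all the intervening norm estimates are routine and parallel the arguments already used for Theorems~\ref{thm:detec_thmv} and~\ref{thm:finite_opt}.
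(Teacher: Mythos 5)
Your proposal is correct and follows essentially the same route as the paper: part (i) via the observation that the inner minimum over $\varepsilon$ equals the counting objective of \eqref{eq:finite_opt_v'} (the paper reaches the same conclusion through an explicit two-sided chain of inequalities between $\|\hat e\|_{0^\nn}$ and $\|\hat e'\|_{0^\nn}$), and part (ii) via the same feasible point $\left(\Phi_{\bar\Lambda^\nn}\right)^{\dagger}\hat z_{\bar\Lambda^\nn}$ built from the $\eta_{\pp,\qq,\rr}'$-minimizing $\bar\Lambda$, the same projection/pseudoinverse split of the residual against the threshold $v_{\max}'$, and the same final bound over a $(\pp-2\qq)$-subset using $\rho_{\pp,2\qq}(\Phi)$. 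No gaps.
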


\begin{proof}
(i): Let $\hat x = \chi^*$ and $\hat e = \varepsilon^*$ be any solution to \eqref{eq:finite_opt_v}, and let $\hat v := \hat z - \Phi \hat x - \hat e$.
Then, for any $\ii \in [\pp]$, it automatically holds that $\|\hat v_\ii^\nn\|_2 \le v_{\max}'$ by the constraint in \eqref{eq:finite_opt_v}.
Similarly, let $\hat x'$ be the solution to \eqref{eq:finite_opt_v'}.
Define $\hat e'{}_\jj^\nn := \hat z_\jj^\nn-\Phi_{\Gamma_\jj^\nn}^\pi \hat x'$ and $\hat v'{}_\jj^\nn := 0_{\nn \times 1}$ for $\jj \in \big\{ \ii \in [\pp]: \|\hat z_\ii^\nn-\Phi_{\Gamma_\ii^\nn}^\pi \hat x' \|_2  > v_{\max}'\big\}$, and define $\hat e'{}_\jj^\nn :=0_{\nn \times 1}$ and $\hat v'{}_\jj^\nn := \hat z_\jj^\nn-\Phi_{\Gamma_\jj^\nn}^\pi \hat x'$ for $\jj \in \big\{\ii \in [\pp] : \|\hat z_\ii^\nn-\Phi_{\Gamma_\ii^\nn}^\pi \hat x' \|_2  \le v_{\max}'\big\}$.
Then, $(\chi,\varepsilon) = (\hat x', \hat e')$ satisfies the constraint in \eqref{eq:finite_opt_v}.

We claim that $\hat x$ with $\hat e$, the solution of \eqref{eq:finite_opt_v}, is also a solution of \eqref{eq:finite_opt_v'} and vice versa.
Indeed, directly from the above definition of $\hat e'$, it is obtained that
\begin{align} \label{eq:ineq1}
\big| \big\{ \ii \in [\pp] : \|\hat z_\ii^\nn-\Phi_{\Gamma_\ii^\nn}^\pi \hat x' \|_2  > v_{\max}'\big\} \big| = \|\hat e'\|_{0^\nn}.
\end{align}
On the other hand, because $\|\hat z_\ii^\nn-\Phi_{\Gamma_\ii^\nn}^\pi \hat x - \hat e_\ii^\nn\|_2 \le v_{\max}'$ for all $\ii \in [\pp]$, it follows that $\|\hat z_\jj^\nn-\Phi_{\Gamma_\jj^\nn}^\pi \hat x \|_2 \le v_{\max}'$ for any $\jj \in \big(\supp^\nn(\hat e)\big)^c$. 
Thus, we have
\begin{align} \label{eq:ineq2}
\big| \big\{ \ii \in [\pp] : \|\hat z_\ii^\nn-\Phi_{\Gamma_\ii^\nn}^\pi \hat x \|_2  > v_{\max}'\big\} \big| \le  \|\hat e\|_{0^\nn}.
\end{align}
Since $\hat e$ is the minimal solution of \eqref{eq:finite_opt_v}, it holds that
\begin{align} \label{eq:ineq3}
\|\hat e\|_{0^\nn} \le \|\hat e'\|_{0^\nn}.
\end{align}
Finally, because $\hat x'$ is the solution of \eqref{eq:finite_opt_v'}, it follows that
\begin{align} \label{eq:ineq4}
\begin{split}
&\big| \big\{ \ii \in [\pp] : \|\hat z_\ii^\nn-\Phi_{\Gamma_\ii^\nn}^\pi \hat x' \|_2 > v_{\max}'\big\} \big| \\
&\qquad\qquad \le \big| \big\{ \ii \in [\pp] : \|\hat z_\ii^\nn-\Phi_{\Gamma_\ii^\nn}^\pi \hat x \|_2  > v_{\max}'\big\} \big|.
\end{split}
\end{align}
Combining \eqref{eq:ineq1}, \eqref{eq:ineq2}, \eqref{eq:ineq3}, and \eqref{eq:ineq4} together results in
\begin{align*} 
\begin{split}
&\|\hat e\|_{0^\nn} =\big| \big\{ \ii \in [\pp] : \|\hat z_\ii^\nn-\Phi_{\Gamma_\ii^\nn}^\pi \hat x \|_2 > v_{\max}'\big\} \big| \\
&= \|\hat e'\|_{0^\nn} =\big| \big\{ \ii \in [\pp] : \|\hat z_\ii^\nn-\Phi_{\Gamma_\ii^\nn}^\pi \hat x' \|_2  > v_{\max}'\big\} \big|.
\end{split}
\end{align*}
Consequently, $\hat x$ is a solution of \eqref{eq:finite_opt_v'} and $\hat x'$ is a solution of \eqref{eq:finite_opt_v}. 
This concludes the claim.

\noindent
(ii): Let $(\hat x,\hat e)$ be a solution $(\chi^*,\varepsilon^*)$ of \eqref{eq:finite_opt_v}.
Then, we first show that $\|\hat e\|_{0^\nn} \le \qq$.
Let $\Lambda$ be a subset of $\left(\supp^\nn(e)\right)^c$ satisfying $|\Lambda| = \pp-\qq$.
Then, there always exists a subset $\bar\Lambda \subset \Lambda$ such that $|\bar\Lambda| = \pp-\rr$ and $\displaystyle\max_{\ii \in \Lambda\setminus\bar\Lambda} \big\| \Phi_{\Gamma_\ii^\nn} \left(\Phi_{\bar\Lambda^\nn}\right)^{\dagger} \big\|_2 \le \eta_{\pp,\qq,\rr}'(\Phi)$.
Let $\bar x := \left(\Phi_{\bar\Lambda^\nn}\right)^{\dagger} \hat z_{\bar\Lambda^\nn}$, which belongs to $\mathcal{F}_{\pp,\rr}(\hat z)$.
Then it follows that $\bar x = x + \left(\Phi_{\bar\Lambda^\nn}\right)^{\dagger} v_{\bar\Lambda^\nn}$ because $\Phi_{\bar\Lambda^\nn}$ has full column rank, and thus $\left(\Phi_{\bar\Lambda^\nn}\right)^{\dagger} \Phi_{\bar\Lambda^\nn} = I_{\nn \times \nn}$.
With $\bar x$ at hand, let us define a noise vector $\bar v := \hat z_{\Lambda^\nn} - \Phi_{\Lambda^\nn} \bar x \in \R^\np$ and an error vector $\bar e := \hat z - \Phi \bar x - \bar v$. 
Here, the vector $\bar v$ can be decomposed as
\begin{align*}
\begin{split}
\bar v &= \Phi_{\Lambda^\nn} x + v_{\Lambda^\nn} - \Phi_{\Lambda^\nn} (x + \left(\Phi_{\bar\Lambda^\nn}\right)^{\dagger} v_{\bar\Lambda^\nn})\\
&= v_{(\Lambda\setminus\bar\Lambda)^\nn} + v_{\bar\Lambda^\nn} - ( \Phi_{(\Lambda\setminus\bar\Lambda)^\nn} + \Phi_{\bar\Lambda^\nn}) \left(\Phi_{\bar\Lambda^\nn}\right)^{\dagger} v_{\bar\Lambda^\nn} \\
&= v_{(\Lambda\setminus\bar\Lambda)^\nn} - \Phi_{(\Lambda\setminus\bar\Lambda)^\nn} \left(\Phi_{\bar\Lambda^\nn}\right)^{\dagger} v_{\bar\Lambda^\nn} + (I_{\np \times \np }\! - \Phi_{\bar\Lambda^\nn} (\Phi_{\bar\Lambda^\nn})^{\dagger}) v_{\bar\Lambda^\nn}, 
\end{split}
\end{align*} 
and thus it follows that
\begin{align*}
\begin{split}
& \|\hat z_\ii^\nn-\Phi_{\Gamma_\ii^\nn}^\pi \bar x - \bar e_\ii^\nn\|_2 = \|\bar v_\ii^\nn\|_2 \\
& \le \max \left\{\eta_{\pp,\qq,\rr}'(\Phi)\sqrt{\pp-\rr}+1, \sqrt{\pp-\rr} \right\} v_{\max} = v_{\max}', {}^\forall \ii \in [\pp] ,
\end{split}
\end{align*}
in which, we use the fact that $\| I_{\np \times \np } - \Phi_{\bar\Lambda^\nn} (\Phi_{\bar\Lambda^\nn})^{\dagger} \|_2 \le 1$ and $ \|v_{\bar\Lambda^\nn} \|_2 \le \sqrt{\pp-\rr}\:v_{\max}$. 
Therefore, it is clear that $\bar x$ and $\bar e$ satisfy the constraint in \eqref{eq:finite_opt_v}, i.e., $\| \hat z_\ii^\nn - \Phi_{\Gamma_\ii^\nn}^\pi \bar x - \bar e_\ii^\nn \|_2 \le v_{\max}'$ for all $\ii \in [\pp]$.
Moreover, from the construction, $\| \bar e\|_{0^\nn} \le \qq$.
Finally, noting that $\hat e$ is the minimal solution of \eqref{eq:finite_opt_v}, we have that $\|\hat e\|_{0^\nn} \le \| \bar e\|_{0^\nn} \le \qq$.

Now, the solution $(\hat x,\hat e)$ of \eqref{eq:finite_opt_v} yields the corresponding noise vector $\hat v$ as $\hat v := \hat z - \Phi \hat x - \hat e$, which satisfies $\|\hat v_\ii^\nn\|_2 \le v_{\max}'$ for all $\ii \in [\pp]$ by the constraint of \eqref{eq:finite_opt_v}. 
Therefore, we have two expressions for the measurement $\hat z = \Phi x + v + e = \Phi \hat x + \hat v + \hat e$, and we are interested in the difference $\tilde x := \hat x - x$.
Let $\tilde e:=\hat e -e$ and $\tilde v:=\hat v - v$.
Then, $\|\tilde e\|_{0^\nn} \le 2\qq$ and $\|\tilde v_\ii^\nn\|_2 \le v_{\max}'+v_{\max} = (\vartheta_{\pp,\qq,\rr} (\Phi)+1)~v_{\max}$ for all $\ii \in [\pp]$.
Let $\tilde\Lambda$ be any subset of $\left(\supp^\nn(\tilde e)\right)^c$ such that $| \tilde\Lambda| = \pp-2\qq$. 
Then, it follows from $\Phi \tilde x + \tilde e = - \tilde v$ that $\Phi_{\tilde\Lambda^\nn} \tilde x = -\tilde v_{\tilde\Lambda^\nn}$. 
Since $\Phi_{\tilde\Lambda^\nn}$ has full column rank by Proposition \ref{lem:corr_equiv}.(ii), it follows that $\tilde x = -\left(\Phi_{\tilde\Lambda^\nn} \right)^{\dagger} \tilde v_{\tilde\Lambda^\nn}$.
Therefore, one can compute the bound of $\|\tilde x\|_2$ as $\|\tilde x\|_2 \le \big\|\left(\Phi_{\tilde\Lambda^\nn} \right)^{\dagger}\big\|_2  \|\tilde v_{\tilde\Lambda^\nn}\|_2 \le (\vartheta_{\pp,\qq,\rr} (\Phi) +1)\sqrt{\pp-2\qq} \: v_{\max}/{\rho_{\pp,2\qq}(\Phi)} = \kappa_{\pp,\qq,\rr}^c (\Phi) v_{\max} $.
\end{proof}

As in Lemma \ref{lem:suff_q}, a simple criterion to check whether a given vector $\hat x \in \R^\nn$ is close to the original $x$ with noisy measurements, is also derived in the following theorem.

\begin{thm} \label{thm:suff_qv}
	For the measurement $\hat z = \Phi x + e + v \in \R^{\np}$ with ($\nn$-stacked) $\qq$-error correctable $\Phi \in \R^{{\np}\times \nn}$, $x \in \R^\nn$, $e \in \Sigma_\qq^\nn$, and $v \in \R^\np$ such that $\|v_\ii^\nn\|_2 \le v_{\max}$, $^\forall \ii \in [\pp]$, the following hold:
	
	\noindent
	(i) $\|\hat x - x \|_2 \le \kappa_{\pp,\qq,\rr}^c (\Phi)~v_{\max}$ if $\hat x$ satisfies 
	$$\big| \big\{ \ii \in [\pp] : \|\hat z_\ii^\nn-\Phi_{\Gamma_\ii^\nn}^\pi \hat x \|_2  > v_{\max}'\big\} \big| \le \qq;$$
	
	\noindent
	(ii) $\|\hat x - x \|_2 > \kappa_{\pp,\qq,\rr}^{c'} (\Phi)~v_{\max}$ if $\hat x$ satisfies 
	$$\big| \big\{ \ii \in [\pp] : \|\hat z_\ii^\nn-\Phi_{\Gamma_\ii^\nn}^\pi \hat x \|_2  > v_{\max}'\big\} \big| > \qq,$$
	where $\displaystyle\kappa_{\pp,\qq,\rr}^{c'} (\Phi) := (\vartheta_{\pp,\qq,\rr} (\Phi) - 1)/\max_{\ii \in [\pp]} \| \Phi_{\Gamma_\ii^\nn} \|_2.$
\end{thm}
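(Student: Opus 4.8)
Part~(i) will be reduced to the final paragraph of the proof of Theorem~\ref{thm:finite_thmv}.(ii), and part~(ii) will be handled by contraposition; the substantive linear‑algebraic estimates are already available (Propositions~\ref{lem:detec_equiv}--\ref{lem:corr_equiv} and Theorem~\ref{thm:finite_thmv}), so the plan is chiefly about building the right auxiliary error--noise decomposition and tracking strict versus non‑strict inequalities. Throughout, write $\mathcal{B}(\hat x) := \{ \ii \in [\pp] : \|\hat z_\ii^\nn - \Phi_{\Gamma_\ii^\nn}^\pi \hat x\|_2 > v_{\max}' \}$, recall $v_{\max}' = \vartheta_{\pp,\qq,\rr}(\Phi)\, v_{\max}$ with $\vartheta_{\pp,\qq,\rr}(\Phi) \ge 1$ (so $\kappa_{\pp,\qq,\rr}^{c'}(\Phi)$ is well defined and nonnegative), and note $\|\Phi_{\Gamma_\ii^\nn}^\pi\|_2 = \|\Phi_{\Gamma_\ii^\nn}\|_2$.

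For (i): assume $|\mathcal{B}(\hat x)| \le \qq$. I would manufacture an error--noise splitting tailored to $\hat x$ by setting $\hat e_\ii^\nn := \hat z_\ii^\nn - \Phi_{\Gamma_\ii^\nn}^\pi \hat x$ for $\ii \in \mathcal{B}(\hat x)$, $\hat e_\ii^\nn := 0_{\nn \times 1}$ for $\ii \notin \mathcal{B}(\hat x)$, and $\hat v := \hat z - \Phi \hat x - \hat e$. By construction $\|\hat e\|_{0^\nn} \le \qq$, while blockwise $\hat v_\ii^\nn = 0_{\nn \times 1}$ on $\mathcal{B}(\hat x)$ and $\hat v_\ii^\nn = \hat z_\ii^\nn - \Phi_{\Gamma_\ii^\nn}^\pi \hat x$ with norm $\le v_{\max}'$ off $\mathcal{B}(\hat x)$; hence $\|\hat v_\ii^\nn\|_2 \le v_{\max}'$ for all $\ii \in [\pp]$. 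These are exactly the properties used in the last paragraph of the proof of Theorem~\ref{thm:finite_thmv}.(ii): from the two representations $\hat z = \Phi x + v + e = \Phi \hat x + \hat v + \hat e$, set $\tilde x := \hat x - x$, $\tilde e := \hat e - e \in \Sigma_{2\qq}^\nn$ and $\tilde v := \hat v - v$ with $\|\tilde v_\ii^\nn\|_2 \le v_{\max}' + v_{\max} = (\vartheta_{\pp,\qq,\rr}(\Phi)+1)\, v_{\max}$, pick $\tilde\Lambda \subset (\supp^\nn(\tilde e))^c$ with $|\tilde\Lambda| = \pp-2\qq$, so $\Phi \tilde x + \tilde e = -\tilde v$ restricted to $\tilde\Lambda^\nn$ reads $\Phi_{\tilde\Lambda^\nn}\tilde x = -\tilde v_{\tilde\Lambda^\nn}$. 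Since $\Phi$ is $\qq$-error correctable, hence $2\qq$-error detectable, Proposition~\ref{lem:corr_equiv}.(ii) with Proposition~\ref{lem:detec_equiv}.(ii) makes $\Phi_{\tilde\Lambda^\nn}$ full column rank, whence $\tilde x = -(\Phi_{\tilde\Lambda^\nn})^\dagger \tilde v_{\tilde\Lambda^\nn}$ and $\|\tilde x\|_2 \le \|\tilde v_{\tilde\Lambda^\nn}\|_2/\rho_{\pp,2\qq}(\Phi) \le (\vartheta_{\pp,\qq,\rr}(\Phi)+1)\sqrt{\pp-2\qq}\, v_{\max}/\rho_{\pp,2\qq}(\Phi) = \kappa_{\pp,\qq,\rr}^c(\Phi)\, v_{\max}$, which is (i).

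For (ii): I would prove the contrapositive, that $\|\hat x - x\|_2 \le \kappa_{\pp,\qq,\rr}^{c'}(\Phi)\, v_{\max}$ forces $|\mathcal{B}(\hat x)| \le \qq$. For any $\ii \notin \supp^\nn(e)$ we have $e_\ii^\nn = 0_{\nn \times 1}$, so $\hat z_\ii^\nn - \Phi_{\Gamma_\ii^\nn}^\pi \hat x = v_\ii^\nn - \Phi_{\Gamma_\ii^\nn}^\pi(\hat x - x)$ and therefore $\|\hat z_\ii^\nn - \Phi_{\Gamma_\ii^\nn}^\pi \hat x\|_2 \le v_{\max} + \big(\max_{\jj \in [\pp]} \|\Phi_{\Gamma_\jj^\nn}\|_2\big)\|\hat x - x\|_2 \le v_{\max} + (\vartheta_{\pp,\qq,\rr}(\Phi)-1)\, v_{\max} = v_{\max}'$. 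Hence no such $\ii$ belongs to $\mathcal{B}(\hat x)$, i.e., $\mathcal{B}(\hat x) \subseteq \supp^\nn(e)$, and since $e \in \Sigma_\qq^\nn$ we get $|\mathcal{B}(\hat x)| \le \qq$; contraposing gives (ii), where the strict inequality in the conclusion is inherited from the strict ``$>$'' in the definition of $\mathcal{B}(\hat x)$.

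The points needing care are purely bookkeeping: verifying that the hand‑built $(\hat e,\hat v)$ in (i) indeed satisfies $\|\hat v_\ii^\nn\|_2 \le v_{\max}'$ at every $\ii$ (trivial on $\mathcal{B}(\hat x)$, where it vanishes), matching strict against non‑strict inequalities in (ii), and the elementary facts $\|\Phi_{\Gamma_\ii^\nn}^\pi\|_2 = \|\Phi_{\Gamma_\ii^\nn}\|_2$ and $\vartheta_{\pp,\qq,\rr}(\Phi) \ge 1$. I do not expect a genuine obstacle beyond this; the subtlest conceptual step is recognizing that (i) concerns an \emph{arbitrary} $\hat x$ rather than a minimizer of \eqref{eq:finite_opt_v'}, so the feasible pair $(\hat e,\hat v)$ must be produced explicitly instead of being inherited from the optimization problem.
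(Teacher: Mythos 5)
Your proposal is correct and follows essentially the same route as the paper: for (i) you build the identical error--noise splitting $(\hat e,\hat v)$ indexed by the set $\mathcal{B}(\hat x)$ and then invoke (and usefully spell out) the final paragraph of the proof of Theorem~\ref{thm:finite_thmv}.(ii); for (ii) you argue by contraposition exactly as the paper does, bounding $\|\hat z_\ii^\nn-\Phi_{\Gamma_\ii^\nn}^\pi \hat x\|_2$ on $\left(\supp^\nn(e)\right)^c$ by $v_{\max}'$ so that $\mathcal{B}(\hat x)\subseteq\supp^\nn(e)$. No gaps.
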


\begin{proof}
(i): With a given $\hat x$, construct the error vector $\hat e$ and the noise vector $\hat v$ as follows.
For $\jj \in \big\{ \ii \in [\pp]: \|\hat z_\ii^\nn-\Phi_{\Gamma_\ii^\nn}^\pi \hat x \|_2  > v_{\max}'\big\}$, define $\hat e{}_\jj^\nn := \hat z_\jj^\nn-\Phi_{\Gamma_\jj^\nn}^\pi \hat x$ and $\hat v{}_\jj^\nn := 0_{\nn \times 1}$.
For $\jj \in \big\{\ii \in [\pp] : \|\hat z_\ii^\nn-\Phi_{\Gamma_\ii^\nn}^\pi \hat x \|_2  \le v_{\max}'\big\}$, define $\hat e{}_\jj^\nn :=0_{\nn \times 1}$ and $\hat v{}_\jj^\nn := \hat z_\jj^\nn-\Phi_{\Gamma_\jj^\nn}^\pi \hat x$.
Then, we have two expressions for the measurement $\hat z = \Phi x + v + e = \Phi \hat x + \hat v + \hat e$, in which $\|\hat v_\ii^\nn\|_2 \le v_{\max}'$ for all $\ii \in [\pp]$ and $\|\hat e \|_{0^\nn} \le \qq$.
Therefore, the same argument in the proof of Theorem \ref{thm:finite_thmv}.(ii) applies and concludes the claim.

\noindent
(ii): This can be shown by contradiction.
Let $\|\hat x - x \|_2 \le \kappa_{\pp,\qq,\rr}^{c'} (\Phi)\: v_{\max}$.
Then, for $\ii \in \left(\supp^\nn(e)\right)^c$,
\begin{align*}
\begin{split}
\|\hat z_\ii^\nn-\Phi_{\Gamma_\ii^\nn}^\pi \hat x \|_2 &= \|\Phi_{\Gamma_\ii^\nn}^\pi (x - \hat x) + v_\ii^\nn \|_2 \\
& \le \Big( \max_{\ii \in [\pp]} \| \Phi_{\Gamma_\ii^\nn} \|_2  \Big) \kappa_{\pp,\qq,\rr}^{c'} (\Phi) v_{\max} + v_{\max} \\
& = \vartheta_{\pp,\qq,\rr} (\Phi) v_{\max} = v_{\max}'.
\end{split}
\end{align*}
Therefore, we have $\big| \big\{ \ii \in [\pp] : \|\hat z_\ii^\nn-\Phi_{\Gamma_\ii^\nn}^\pi \hat x \|_2  > v_{\max}' \big\} \big| \le \qq$ because $\|e\|_{0^\nn} \le \qq$.
\end{proof}

\section{Characterization of Redundant Observability} \label{sec:red}

In this section, we introduce the {\em redundant observability} and relate that concept to the {\em dynamic security index}, {\em attack detectability}, and {\em observability under sensor attacks}.
It will soon be revealed that an observability matrix behaves in the same way as a coding matrix as examined in the previous section, and hence its properties determine resilience of control systems under sensor attacks. 

\subsection{Redundant Observability}

From a control theoretical viewpoint, the notion of {\em redundant observability} for the system \eqref{eq:plant} is defined as follows.  

\begin{defn} \label{def:obs_redun}
	The pair $(A, C)$ or the dynamical system \eqref{eq:plant} is said to be {\em $\qq$-redundant observable} if the pair $(A, C_{\Lambda}^\pi)$ is observable for any $\Lambda \subset [\pp]$ satisfying $|\Lambda| \ge \pp-\qq$.	
\end{defn}

To characterize the redundant observability in the following proposition, we first obtain the observability matrix $G \in \R^{{\np}\times \nn}$ as follows: 
\begin{align} \label{eq:barG}
G :=\left[G_1^\top~G_2^\top~\cdots~G_\pp^\top\right]^\top,
\end{align}
where
$G_\ii :=\left[c_\ii^\top~(c_\ii A)^\top~\cdots~(c_\ii A^{\nn-1})^\top\right]^\top$
is an observability matrix of the pair $(A, c_\ii)$.

\begin{prop} \label{lem:red_obs_equiv}
	The following are equivalent:\\
	(i) the pair $(A, C)$ is $\qq$-redundant observable;\\
	(ii) the matrix $G$ is ($\nn$-stacked) $\qq$-error detectable. 
\end{prop}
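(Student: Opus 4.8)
The plan is to reduce the claim to Proposition~\ref{lem:detec_equiv} via the classical observability rank condition. First I would recall that for any index set $\Lambda \subset [\pp]$, the Kalman observability matrix of the pair $(A, C_\Lambda^\pi)$ is, up to a permutation of its rows, exactly the submatrix $G_{\Lambda^\nn}^\pi$ obtained by keeping the row blocks $G_\ii$, $\ii\in\Lambda$, of $G$. Indeed, the standard observability matrix arranges its rows by the power of $A$ first (i.e. $C_\Lambda^\pi$, then $C_\Lambda^\pi A$, up to $C_\Lambda^\pi A^{\nn-1}$), whereas $G_{\Lambda^\nn}^\pi$ arranges them sensor by sensor (the $\nn$ rows $c_\ii, c_\ii A, \dots, c_\ii A^{\nn-1}$ of $G_\ii$, for each $\ii\in\Lambda$); these two arrangements differ only by a row permutation, which leaves the column rank unchanged. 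Hence $(A, C_\Lambda^\pi)$ is observable if and only if $G_{\Lambda^\nn}^\pi$ (equivalently $G_{\Lambda^\nn}$, since adjoining zero rows does not change rank) has full column rank $\nn$.

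With this observation in hand, the proof is a direct translation. By Definition~\ref{def:obs_redun}, $(A,C)$ is $\qq$-redundant observable precisely when $(A, C_\Lambda^\pi)$ is observable for every $\Lambda\subset[\pp]$ with $|\Lambda|\ge \pp-\qq$; by the preceding paragraph this holds if and only if $G_{\Lambda^\nn}$ has full column rank for every such $\Lambda$, which is exactly condition (ii) of Proposition~\ref{lem:detec_equiv} applied to the coding matrix $\Phi = G$. Invoking the equivalence (i)$\Leftrightarrow$(ii) of Proposition~\ref{lem:detec_equiv} then yields that this is in turn equivalent to $G$ being ($\nn$-stacked) $\qq$-error detectable, which establishes (i)$\Leftrightarrow$(ii) of the present proposition.

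The only point that requires a little care — and the closest thing to an obstacle here — is the row-permutation identification between the observability matrix of a multi-output pair and the sensor-wise stacked matrix $G$, together with checking that passing between the row-eliminated form $G_{\Lambda^\nn}^\pi$ and the row-zeroed form $G_{\Lambda^\nn}$ preserves the full-column-rank property. Everything else is bookkeeping: the correspondence $\Lambda \mapsto \Lambda^\nn$ between sensor indices and stacked row-block indices is precisely the one fixed in the Notation section, so no new definitions are needed and the argument is essentially a restatement.
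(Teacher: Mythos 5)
Your proposal is correct and follows essentially the same route as the paper: both identify the observability matrix of $(A, C_\Lambda^\pi)$ with $G_{\Lambda^\nn}^\pi$ and then invoke the equivalence (i)$\Leftrightarrow$(ii) of Proposition~\ref{lem:detec_equiv} applied to $G$. The paper states the identification in one line without comment, whereas you spell out the row-permutation and rank-preservation details; this is just added care, not a different argument.
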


\begin{proof}
	From the fact that $G_{\Lambda^\nn}^\pi$ is the observability matrix of the pair $(A, C_{\Lambda}^\pi)$, the pair $(A, C)$ is $\qq$-redundant observable
	if and only if $G_{\Lambda^\nn}^\pi$ has full column rank for any $\Lambda \subset [\pp]$ satisfying $|\Lambda| \ge \pp-\qq$.
	Thus, the result directly follows from Proposition \ref{lem:detec_equiv}.
\end{proof}

\subsection{Attack Detectability and Dynamic Security Index}

Assume tentatively that there is no control input, disturbance, nor noise in the system \eqref{eq:plant} so that we can focus on the attack signal only.
Then, the output measurements for a finite time period are collected and the stacked output sequence is computed as
\begin{align} \label{eq:baryn}
\begin{split}
\bar y^{[0:\nn-1]}:=\begin{bmatrix}
\bar y_1^{[0:\nn-1]}  \\
\bar y_2^{[0:\nn-1]}  \\
\vdots \\
\bar y_\pp^{[0:\nn-1]}  \\
\end{bmatrix} &=
\begin{bmatrix}
G_1\\
G_2\\
\vdots \\
G_\pp\\
\end{bmatrix} x(0) +
\begin{bmatrix}
a_1^{[0:\nn-1]}  \\
a_2^{[0:\nn-1]}  \\
\vdots \\
a_\pp^{[0:\nn-1]} \\
\end{bmatrix}\\
&= G x(0) + a^{[0:\nn-1]},
\end{split}
\end{align}
where $\bar y_\ii^{[0:\nn-1]}:=[\bar y_\ii(0)~\bar y_\ii(1)~\cdots~\bar y_\ii(\nn-1)]^\top$ and $a_\ii^{[0:\nn-1]}:=[a_\ii(0)~a_\ii(1)~\cdots~a_\ii(\nn-1)]^\top$.
Noting that the situation is exactly the same as the noiseless case in Section \ref{subsec:e_detec} and $a^{[0:\nn-1]}$ is ($\nn$-stacked) $\qq$-sparse by Assumption \ref{ass:sparse}, we can introduce the notion of {\em $\qq$-attack detectability} of the system \eqref{eq:plant} as follows.

\begin{defn} \label{defn:a_detect}
	The pair $(A, C)$ or the dynamical system \eqref{eq:plant} without disturbances/noises is said to be {\em $\qq$-attack detectable} if, for all $x(0), x'(0) \in \R^\nn$ and $a^{[0:\nn-1]} \in \Sigma_\qq^\nn$ such that $G x(0) + a^{[0:\nn-1]} = G x'(0)$, it holds that  $x(0) = x'(0)$.
\end{defn}

Furthermore, a direct comparison between Definitions \ref{defn:detect} and \ref{defn:a_detect} simply leads to the following proposition. 
\begin{prop} \label{lem:a_detec_equiv}
	The following are equivalent:\\
	(i) the pair $(A, C)$ is $\qq$-attack detectable;\\
	(ii) the matrix $G$ is ($\nn$-stacked) $\qq$-error detectable.
\end{prop}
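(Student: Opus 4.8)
The plan is to observe that Definition~\ref{defn:a_detect} is nothing but Definition~\ref{defn:detect} instantiated with the particular coding matrix $\Phi = G$. First I would fix the dictionary: read the coding matrix $\Phi$ as the observability matrix $G$, the unknown input $x$ (resp.\ $x'$) as the initial condition $x(0)$ (resp.\ $x'(0)$), and the sparse error $e \in \Sigma_\qq^\nn$ as the stacked attack sequence $a^{[0:\nn-1]} \in \Sigma_\qq^\nn$. Note the ambient spaces and sparsity classes match exactly: $x(0) \in \R^\nn$ is precisely the class of admissible inputs in Definition~\ref{defn:detect}, and $a^{[0:\nn-1]}$ lies in $\Sigma_\qq^\nn$ by Assumption~\ref{ass:sparse}, which is exactly the error class in Definition~\ref{defn:detect}. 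Under this relabeling the defining implication of $\qq$-attack detectability — that $G x(0) + a^{[0:\nn-1]} = G x'(0)$ forces $x(0) = x'(0)$ — is verbatim the defining implication of ($\nn$-stacked) $\qq$-error detectability of $G$.

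Concretely, for (i) $\Rightarrow$ (ii) I would assume $(A,C)$ is $\qq$-attack detectable and take arbitrary $x, x' \in \R^\nn$ and $e \in \Sigma_\qq^\nn$ with $G x + e = G x'$; interpreting $x,x'$ as initial conditions and $e$ as the (admissible) attack $a^{[0:\nn-1]}$, attack detectability delivers $x = x'$, so $G$ is $\qq$-error detectable. The reverse implication (ii) $\Rightarrow$ (i) is the same substitution run backwards. Alternatively, one could chain through Proposition~\ref{lem:detec_equiv} and Proposition~\ref{lem:red_obs_equiv}, but that detour is not needed here.

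There is essentially no obstacle: the content of the proposition is the recognition that the stacked-output equation~\eqref{eq:baryn} places the attack-detection problem exactly in the template of the static stacked error-correction problem of Section~\ref{sec:err}, with $G$ playing the role of the coding matrix. The only point requiring a line of care is confirming that the restriction $a^{[0:\nn-1]} \in \Sigma_\qq^\nn$ coincides with the error class $\Sigma_\qq^\nn$, which is immediate from the $\nn$-stacked sparsity bookkeeping introduced in the Notation section together with Assumption~\ref{ass:sparse}.
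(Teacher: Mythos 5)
Your proposal is correct and matches the paper exactly: the paper gives no separate proof for this proposition, stating only that ``a direct comparison between Definitions~\ref{defn:detect} and \ref{defn:a_detect} simply leads to'' it, which is precisely your observation that Definition~\ref{defn:a_detect} is Definition~\ref{defn:detect} instantiated with $\Phi = G$, $x = x(0)$, and $e = a^{[0:\nn-1]}$. The only superfluous element is your appeal to Assumption~\ref{ass:sparse}: the definition of $\qq$-attack detectability already quantifies over all $a^{[0:\nn-1]} \in \Sigma_\qq^\nn$, so the error classes match by definition without invoking that assumption.
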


As a tool for the vulnerability analysis of a system, the security index quantifies fundamental limitations on the attack detectability.
That is, the {\em dynamic security index} of the system \eqref{eq:plant}, $\alpha_d (A, C)$, is defined by the minimum number of sensor attacks for adversaries to remain undetectable and is computed by examining the system's strong observability in \cite{Chen15} as
\begin{align} \label{eq:secu_index2}
\alpha_d (A, C) = \min_{v\in\VV(A)} \| C v\|_0.
\end{align}
It is shown in the following proposition that the dynamic security index can also be characterized by the error detectability of the observability matrix $G$ through its cospark.

\begin{prop} \label{prop:cal_d_secu_index}
	For $\alpha_d (A, C)$ given in \eqref{eq:secu_index2}, it holds that
	\begin{align} \label{eq:secu_index1}
	\alpha_d (A, C) = \min_{x\in\R^\nn,\!~ x\neq 0_{\nn\times 1}} \| G x\|_{0^\nn} = \cospark^\nn(G).
	\end{align}
\end{prop}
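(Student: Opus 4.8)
The plan is to prove $\alpha_d(A,C) = \cospark^\nn(G)$, since the second equality in \eqref{eq:secu_index1} is merely the definition of $\cospark^\nn$ applied to $\Phi = G$. The bridge between the spectral quantity $\alpha_d(A,C)$, given by \eqref{eq:secu_index2} as $\min_{v\in\VV(A)}\|Cv\|_0$, and the cospark is the elementary observation that, for $x \in \R^\nn$, the $\ii$-th block $G_\ii x$ of $Gx$ vanishes precisely when $c_\ii A^j x = 0$ for $j = 0,\dots,\nn-1$, i.e.\ precisely when the row $c_\ii$ annihilates the $A$-invariant subspace $W_x := \operatorname{span}\{x,Ax,\dots,A^{\nn-1}x\}$. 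Hence $\|Gx\|_{0^\nn} = |\{\ii \in [\pp] : c_\ii|_{W_x} \neq 0\}|$, and in particular $\|Gx\|_{0^\nn}$ depends on $x$ only through $W_x$. I will also use that for any eigenvector $v$ of $A$ with real and imaginary parts $v', v'' \in \R^\nn$, one has $\supp(Cv) = \supp(Cv') \cup \supp(Cv'')$, which equals the set of indices $\ii$ for which $c_\ii$ does not annihilate $\operatorname{span}_\R\{v',v''\}$, and that normalizing $v$ leaves this support unchanged.

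For $\alpha_d(A,C) \le \cospark^\nn(G)$, I would fix an arbitrary real $x \neq 0$. Since $W_x$ is a nonzero $A$-invariant subspace, the restriction $A|_{W_x}$ has an eigenvalue $\lambda \in \C$ with an eigenvector $v \in \C^\nn$ lying in the complexification of $W_x$; complex eigenvalues and eigenvectors are admissible in $\VV(A)$. The real and imaginary parts of $v$ lie in $W_x$, so any row $c_\ii$ annihilating $W_x$ also satisfies $c_\ii v = 0$; therefore $\supp(Cv) \subseteq \{\ii : c_\ii|_{W_x} \neq 0\} = \supp^\nn(Gx)$, and after normalization $\alpha_d(A,C) \le \|Cv\|_0 \le \|Gx\|_{0^\nn}$. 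Minimizing over $x \neq 0$ yields the inequality.

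For the reverse inequality $\cospark^\nn(G) \le \alpha_d(A,C)$, I would fix $v \in \VV(A)$ with $Av = \lambda v$ and split on whether $\lambda$ is real. If $\lambda \in \R$, then the real and imaginary parts $v', v''$ of $v$ satisfy $Av' = \lambda v'$ and $Av'' = \lambda v''$, so one of them, say $u$, is a nonzero real eigenvector; then $G_\ii u = (c_\ii u)\,[1\ \lambda\ \cdots\ \lambda^{\nn-1}]^\top$ with the power vector nonzero (its first entry is $1$), whence $\supp^\nn(Gu) = \supp(Cu) \subseteq \supp(Cv)$ and $\cospark^\nn(G) \le \|Gu\|_{0^\nn} \le \|Cv\|_0$. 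If $\lambda \notin \R$, then $v'$ and $v''$ are $\R$-linearly independent and $W := \operatorname{span}_\R\{v',v''\}$ is a $2$-dimensional $A$-invariant subspace; since $Av'$ has a nonzero component along $v''$, the pair $v', Av'$ already spans $W$, so $W_{v'} = W$. Consequently $\supp^\nn(Gv') = \{\ii : c_\ii|_W \neq 0\} = \supp(Cv') \cup \supp(Cv'') = \supp(Cv)$, giving $\cospark^\nn(G) \le \|Gv'\|_{0^\nn} = \|Cv\|_0$. Minimizing over $v \in \VV(A)$ completes the proof.

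The main obstacle is entirely the bookkeeping around complex eigenvectors: one must verify, in each case, that replacing $v$ by its real and imaginary parts (and by the real $A$-invariant subspace they generate) neither enlarges nor shrinks the relevant support set, and one must treat separately the degenerate real-eigenvalue case in which $\operatorname{span}_\R\{v',v''\}$ collapses to a line. No estimate here is quantitatively delicate; the care required is purely linear-algebraic and combinatorial.
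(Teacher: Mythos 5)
Your proof is correct, and it reaches the result by a noticeably different route than the paper. The paper reduces everything to the identity $\min_{v\in\VV(A)}\|Cv\|_0=\min_{v\in\VV(A)}\|Gv\|_{0^\nn}$ for eigenvectors, disposes of the inequality $\alpha_d\ge\cospark^\nn(G)$ by observing that the complex and real ($\nn$-stacked) cosparks of the real matrix $G$ coincide, and then obtains the reverse inequality by taking the index set $\Lambda$ complementary to the support of a cospark minimizer, noting via Proposition~\ref{lem:detec_equiv} that $(A,C_\Lambda^\pi)$ is unobservable while every $(A,C_{\Lambda\cup\{\ii\}}^\pi)$ is observable, and invoking the PBH test to extract an eigenvector $v^*$ with $\|Gv^*\|_{0^\nn}=\alpha^*$. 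You instead work directly with the cyclic invariant subspace $W_x=\operatorname{span}\{x,Ax,\dots,A^{\nn-1}x\}$ of a minimizer: the observation $\|Gx\|_{0^\nn}=|\{\ii:c_\ii|_{W_x}\neq 0\}|$, together with the existence of an eigenvector of $A|_{W_x}$ in the complexification of $W_x$, replaces the paper's PBH step, and your explicit real-versus-complex eigenvalue case analysis (producing the real vector $u$ or $v'$ with $\supp^\nn(Gu)\subseteq\supp(Cv)$, resp.\ $\supp^\nn(Gv')=\supp(Cv)$) replaces the paper's complex-cospark argument. The underlying fact is of course the same --- every nonzero $A$-invariant subspace contains an eigenvector up to complexification, which is also what powers PBH --- but your version is self-contained, does not rely on Proposition~\ref{lem:detec_equiv}, and makes the real/complex bookkeeping fully explicit, at the cost of being somewhat longer; the paper's version is shorter because it leans on already-established equivalences and a standard test. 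All the individual claims you flag (linear independence of $v'$ and $v''$ when $\lambda\notin\R$, $W_{v'}=\operatorname{span}_\R\{v',v''\}$, $\supp(Cv)=\supp(Cv')\cup\supp(Cv'')$, and the degenerate real-eigenvalue case) check out.
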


\begin{proof}
	When $A v = \lambda v$, one can trivially check that
	$$ \min_{v\in\VV(A)} \| C v\|_0 = \min_{v\in\VV(A)} \| G v\|_{0^\nn}$$
	since $G_\ii v = \left[c_\ii v~\lambda c_\ii v~\cdots~\lambda^{\nn-1} c_\ii v\right]^\top$.
	Noting that
	$$\min_{x\in\C^\nn,\!~ x\neq 0_{\nn\times 1}} \| G x\|_{0^\nn} = \min_{x\in\R^\nn,\!~ x\neq 0_{\nn\times 1}} \| G x\|_{0^\nn}$$ 
	because $G\in{\R^{\np\times \nn}}$ is a real matrix, it suffices to show that
	$$ \min_{v\in\VV(A)} \| G v\|_{0^\nn} = \min_{x\in\C^\nn,\!~ x\neq 0_{\nn\times 1}} \| G x\|_{0^\nn}.$$
	Now, we claim that there exists $v^*\in\VV(A)$ such that 
	$$ \| G v^*\|_{0^\nn} = \min_{x\in\C^\nn,\!~ x\neq 0_{\nn\times 1}} \| G x\|_{0^\nn}.$$
	Let us denote the optimal value of the problem \eqref{eq:secu_index1} by
	$$\alpha^* := \min_{x\in\R^\nn,\!~ x\neq 0_{\nn\times 1}} \| G x\|_{0^\nn}.$$
	By the equivalence between Proposition \ref{lem:detec_equiv}.(ii) and (iii) with the observability matrix $G$, there exists an index set $\Lambda \subset [\pp]$ satisfying $|\Lambda| = \pp - \alpha^*$ such that the observability matrix $G_{\Lambda^\nn}^\pi$ does not have full column rank but the observability matrix $G_{(\Lambda\cup\{\ii\})^\nn}^\pi$ has full column rank for every $\ii\in\Lambda^c$.
	That is, the pair $(A, C_\Lambda^\pi)$ is not observable but the pair $(A, C_{\Lambda\cup\{\ii\}}^\pi )$ is observable for every $\ii\in\Lambda^c$. 
	Applying the Popov--Belevitch--Hautus (PBH) observability test, we conclude that there exist $\lambda^* \in \C$ and $v^* \in \VV(A)$ such that
	$$\begin{bmatrix}
	\lambda^* I_{\nn\times\nn}-A\\
	C_\Lambda^\pi \\
	\end{bmatrix} v^* 
	=\begin{bmatrix}
	0_{\nn\times 1}\\
	0_{(\pp-\alpha^*)\times 1}\\
	\end{bmatrix} {\rm~~and~~} c_\ii v^* \neq 0,~{}^\forall \ii \in \Lambda^c.$$
	The claim easily follows by verifying that $ \| G v^*\|_{0^\nn} = \alpha^*$.
\end{proof}

\subsection{Observability under Sparse Sensor Attacks}

In this section, the notion of {\em observability under $\qq$-sparse sensor attacks} is introduced and an equivalent condition is directly derived from the definition, as follows.

\begin{defn} \label{def:obs_att}
	The pair $(A, C)$ or the dynamical system \eqref{eq:plant} without disturbances/noises is said to be {\em observable under $\qq$-sparse sensor attacks}	if the initial state $x(0)$ can be determined from the output $\bar y$ over a finite number of sampling steps with any sensor attack $a$ satisfying Assumption \ref{ass:sparse}.
\end{defn}

\begin{prop} \label{lem:obs_equiv}
	The following are equivalent:\\
	(i) the pair $(A, C)$ is observable under $\qq$-sparse sensor attacks;\\
	(ii) the matrix $G$ is ($\nn$-stacked) $\qq$-error correctable.
\end{prop}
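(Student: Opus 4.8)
The plan is to show that observability under $\qq$-sparse sensor attacks is exactly the uniqueness-of-recovery property that defines $\qq$-error correctability of the coding matrix $G$, by unrolling both definitions over a length-$\nn$ window and using that the stacked output equation \eqref{eq:baryn} has the form $\bar y^{[0:\nn-1]} = G x(0) + a^{[0:\nn-1]}$. First I would fix the measurement window: since $(A,C)$ has state dimension $\nn$, any observability-type information saturates within $\nn$ samples, so ``a finite number of sampling steps'' in Definition \ref{def:obs_att} can be taken to be exactly $\nn$ without loss of generality (more samples only append rows $c_\ii A^j$ for $j\ge\nn$ that are linear combinations of the existing rows by Cayley--Hamilton, so they neither help nor hurt). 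This reduces both conditions to statements about $G$ and the stacked attack vector $a^{[0:\nn-1]}$, which by Assumption \ref{ass:sparse} lies in $\Sigma_\qq^\nn$ (the supports of the per-sensor blocks $a_\ii^{[0:\nn-1]}$ are contained in the at-most-$\qq$ attacked sensor indices).

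Next I would prove the two implications. For (ii) $\Rightarrow$ (i): suppose $G$ is $\qq$-error correctable and suppose two initial states $x(0), x'(0)$ are both consistent with some observed output, i.e. there are $a, a' \in \Sigma_\qq^\nn$ with $G x(0) + a = \bar y^{[0:\nn-1]} = G x'(0) + a'$; Definition \ref{defn:correct} applied with $x_1 = x(0)$, $x_2 = x'(0)$, $e_1 = a$, $e_2 = a'$ gives $x(0) = x'(0)$, so the initial state is uniquely determined — and it is actually reconstructible by the decoder of Theorem \ref{thm:finite_opt} (the $\ell_0$-minimizer over the finite set $\mathcal{F}_{\pp,\rr}(\bar y^{[0:\nn-1]})$), so (i) holds. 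For (i) $\Rightarrow$ (ii): suppose $(A,C)$ is observable under $\qq$-sparse sensor attacks but, for contradiction, $G$ is not $\qq$-error correctable; then by Definition \ref{defn:correct} there exist $x_1 \neq x_2$ and $e_1, e_2 \in \Sigma_\qq^\nn$ with $G x_1 + e_1 = G x_2 + e_2 =: \bar y^{[0:\nn-1]}$. Each $e_j$ is a legitimate $\qq$-sparse attack (Assumption \ref{ass:sparse}) that produces this same output from initial state $x_j$, so the output does not determine the initial state, contradicting (i).

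The only genuinely nontrivial point — and the one I would be most careful about — is the window-length reduction: Definition \ref{def:obs_att} quantifies over ``a finite number of sampling steps'' rather than a fixed horizon, so I must argue that extending beyond $\nn$ steps adds nothing. The clean way is Cayley--Hamilton: for $j \ge \nn$, $c_\ii A^j$ is a fixed linear combination of $\{c_\ii, c_\ii A, \dots, c_\ii A^{\nn-1}\}$, hence every extended observability matrix $G^{[0:N-1]}$ with $N \ge \nn$ has the same column space and the same per-sensor row-block support structure as $G$; so $x(0)$ is determined from an $N$-step window under $\qq$-sparse attacks if and only if it is determined from the $\nn$-step window, reducing (i) to exactly the uniqueness clause in Definition \ref{defn:correct} for $G$. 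I would also remark (or cite Proposition \ref{lem:corr_equiv}) that, combined with Proposition \ref{lem:red_obs_equiv}, this yields the pleasant corollary that observability under $\qq$-sparse sensor attacks is equivalent to $2\qq$-redundant observability of $(A,C)$.
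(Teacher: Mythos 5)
Your proof is correct and follows essentially the same route as the paper's: both reduce Definition \ref{def:obs_att} to the static error-correction setting by writing the stacked output as $\bar y^{[0:\nn-1]} = G x(0) + a^{[0:\nn-1]}$ with $a^{[0:\nn-1]} \in \Sigma_\qq^\nn$ (Assumption \ref{ass:sparse}) and then invoking Definition \ref{defn:correct}. The one place you go beyond the paper is the Cayley--Hamilton argument justifying the restriction to an $\nn$-step window; the paper's proof silently identifies ``a finite number of sampling steps'' with $\nn$ steps, so your extra care there is a genuine (and welcome) tightening of the same argument rather than a different approach.
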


\begin{proof}
Note that the output sequence $\bar y^{[0:\nn-1]}$ is given by $\bar y^{[0:\nn-1]}=G x(0) + a^{[0:\nn-1]} \in \R^\np$ in \eqref{eq:baryn} and $a^{[0:\nn-1]} \in \Sigma_\qq^\nn$ by Assumption \ref{ass:sparse}, the result directly follows from Definition \ref{defn:correct} which says that $G$ is ($\nn$-stacked) $\qq$-error correctable if and only if $x(0)$ can be reconstructed from the output measurements $\bar y^{[0:\nn-1]}$.  
\end{proof}

\section{Design of Attack-Resilient Estimator} \label{sec:est}

An attack-resilient state estimator $\EE$, which combines the partial observers $\OO_\ii$ and the decoder $\DD$, is designed in this section.
First, the partial observers $\OO_\ii$ are designed by applying the Kalman observability decomposition to each sensor output.
Second, the previously developed error correction technique tailored into this specific problem constitutes the decoder $\DD$ and it recovers the original state variable $x$.

\subsection{Design of Partial Observers} \label{subsec:obs}

With only one measurement $\bar y_\ii (k)$ of the plant \eqref{eq:plant}, a single-output system is obtained as follows:
\begin{align}[left = \PP_{\ii}: \empheqlbrace\,] 
\begin{split} \label{eq:iplant}
& x(k+1) = A x(k) + B u(k) + d(k) \\
& \bar y_{\ii}(k) = c_{\ii} x(k) + n_{\ii}(k) + a_{\ii}(k).
\end{split}
\end{align}
The observability matrix $G_\ii$ of \eqref{eq:iplant} is used to divide the $\nn$-dimensional state space into two subspaces. To derive a transformation matrix, first, let $\nu_{\ii}$ be the observability index of $(A, c_\ii)$, i.e., $\nu_{\ii} := {\sf rank}(G_\ii)$.
Then the set of the first $\nu_\ii$ rows of $G_\ii$ is linearly independent.
The null space of $G_\ii$, $\NN(G_\ii)$, which is $A$-invariant, is the unobservable subspace. 
Furthermore, the quotient space $\R^\nn/\NN(G_\ii)$ is sometimes called, with abuse of terminology, the observable subspace.
The matrices $Z_\ii \in \R^{\nn \times \nu_\ii}$ and $W_\ii \in \R^{\nn \times (\nn-\nu_\ii)}$ are selected such that their columns are orthonormal bases of $\NN(G_\ii)^\perp(=\RR(G_\ii^\top))$ and $\NN(G_\ii)$, respectively. 
Finally, by the Kalman observability decomposition, the state $x$ is decomposed into the observable sub-state $z_\ii \in \R^{\nu_\ii}$ and the unobservable sub-state $w_\ii \in \R^{\nn-\nu_\ii}$ with a similarity transformation 
\begin{align} \label{eq:sim_trans}
\begin{bmatrix}
z_\ii^\top & w_\ii^\top 
\end{bmatrix}^\top =
\begin{bmatrix}
Z_\ii & W_\ii
\end{bmatrix}^\top x.
\end{align}

Since it follows that $Z_\ii^\top A W_\ii = O_{\nu_\ii \times (\nn-\nu_\ii)}$ and $c_\ii W_\ii = 0_{1 \times (\nn-\nu_\ii)}$ from the construction of $Z_\ii$ and $W_\ii$, the change of variable \eqref{eq:sim_trans} leads the original single-output system \eqref{eq:iplant} to the decomposed form of
\begin{align}[left = \PP'_{\ii}:\! \empheqlbrace\,]
\begin{split} \label{eq:decomp}
& \begin{bmatrix}
z_\ii (k+1) \\
w_\ii (k+1) \\
\end{bmatrix} \!=\!
\begin{bmatrix}
Z_\ii^\top A Z_\ii & O_{\nu_\ii \times (\nn-\nu_\ii)}\\
W_\ii^\top A Z_\ii & W_\ii^\top A W_\ii\\
\end{bmatrix}
\begin{bmatrix}
z_\ii (k) \\
w_\ii (k) \\
\end{bmatrix}\\
& \qquad\qquad\qquad~ +
\begin{bmatrix}
Z_\ii^\top B \\
W_\ii^\top B \\
\end{bmatrix} u(k) +
\begin{bmatrix}
Z_\ii^\top \\
W_\ii^\top \\
\end{bmatrix}d(k)\\
& \bar y_{\ii}(k) \!=\!
\begin{bmatrix}
c_\ii Z_\ii & 0_{1 \times (\nn-\nu_\ii)}\\
\end{bmatrix}
\begin{bmatrix}
z_\ii (k) \\
w_\ii (k) \\
\end{bmatrix}
+ n_{\ii} (k) + a_{\ii}(k).\!\!\!\!\!\!
\end{split}
\end{align}
By dropping the unobservable sub-state $w_\ii$ from \eqref{eq:decomp}, the observable quotient sub-system of \eqref{eq:decomp} is obtained as
\begin{align}[left = \PP_{\ii}^o: \empheqlbrace\,]
\begin{split} \label{eq:obs_sys}
& z_\ii (k+1) = S_\ii z_\ii (k) +
Z_\ii^\top B u(k) +Z_\ii^\top d(k)\\
& \bar y_{\ii}(k) = t_\ii z_\ii (k) + n_{\ii} (k) + a_{\ii} (k),
\end{split}
\end{align}
where $S_\ii := Z_\ii^\top A Z_\ii$, $t_\ii := c_\ii Z_\ii$, and the pair $(S_\ii, t_\ii)$ is observable.

Then, the partial observer $\OO_\ii$ is designed by a Luenberger observer for \eqref{eq:obs_sys} given in the following form:
\begin{align} \label{eq:observer}
\begin{split}
\OO_\ii :~ &{\hat z}_\ii (k+1) = F_\ii \hat z_\ii (k) + Z_\ii^\top B u(k) + L_\ii \bar y_{\ii}(k),
\end{split}
\end{align}
where the injection gain $L_\ii$ is chosen so that $F_\ii:=S_\ii - L_\ii t_\ii$ is Schur stable.
The dynamics of state estimation error $\tilde z_\ii := \hat z_\ii - z_\ii$ is governed by
\begin{align*} \label{eq:err_dyn11}
\begin{split}
\mathcal{F}_\ii:{\tilde z}_\ii (k\!+\!1) = F_\ii \tilde z_\ii (k) + L_\ii n_\ii (k) - Z_\ii^\top d(k) + L_\ii a_\ii (k)
\end{split}
\end{align*} 
whose solution becomes
\begin{equation} \label{eq:err_z}
\tilde z_\ii(k) = v_\ii (k) + e_\ii (k),
\end{equation}
where $v_\ii (k) := F_\ii^k \tilde z_\ii(0) + \sum_{j=0}^{k-1} F_\ii^{k-1-j} \left(L_\ii n_\ii (j) - Z_\ii^\top d (j) \right)$ and $e_\ii (k) := \sum_{j=0}^{k-1} F_\ii^{k-1-j} L_\ii a_\ii (j)$.
Here, the attack-induced estimation error vector $e_\ii(k)$ may have arbitrary values.
For all $k \ge 0$ and $\ii \in [\pp]$, there exist $\mu_F \ge 1$ and $0 < \beta < 1$ such that $\|F_\ii^k\|_2 \le \mu_F \beta^k$ since $F_\ii$ is Schur stable. 
In addition, for some $\mu_L$ and $\mu_Z$, it holds that $\|F_\ii^k L_\ii\|_2 \le \mu_L \beta^k$ and $\|F_\ii^k Z_\ii^\top\|_2 \le \mu_Z \beta^k$.
Then, one can easily show that
\begin{align} \label{eq:err_eq1}
\begin{split}
\|v_\ii(k)\|_2 \le \mu_F\|\tilde z_\ii(0)\|_2 \beta^k + w_{\max} \le v_{\max} (k),
\end{split}
\end{align}
where $w_{\max} := (\mu_L n_{\max} + \mu_Z d_{\max})/(1-\beta)$ and $\displaystyle v_{\max} (k) := \max_{\ii\in[\pp]} \left\{\mu_F\|\tilde z_\ii(0)\|_2 \beta^k + w_{\max}\right\}$.
As $k$ increases, $v_{\max} (k)$ converges to $w_{\max}$.

\subsection{Design of the Decoder} \label{subsec:dec}

The decoder collects all the data $\hat z_\ii$ from the partial observers $\OO_\ii$ and formulates the problem in the form of \eqref{eq:static_y}.
To this end, $Z_\ii^\top x = z_\ii$ in \eqref{eq:sim_trans}, is used.
Appending $\nn-\nu_\ii$ zero row vectors, $0_{1\times \nn}$, to each $Z_\ii^\top$ and stacking them all, we have
\begin{equation} \label{eq:Zxzz}
\begin{bmatrix}
{Z_1^\nn}^\top  \\
\vdots \\
{Z_\pp^\nn}^\top \\
\end{bmatrix} x (k) =
\begin{bmatrix}
{z_1^\nn} (k) \\
\vdots \\
{z_\pp^\nn} (k) \\
\end{bmatrix} =
\begin{bmatrix}
{\hat z_1^\nn} (k) \\
\vdots \\
{\hat z_\pp^\nn} (k) \\
\end{bmatrix} -
\begin{bmatrix}
{\tilde z_1^\nn} (k) \\
\vdots \\
{\tilde z_\pp^\nn} (k) \\
\end{bmatrix},
\end{equation}
where
\begin{alignat}{2} \label{eq:aug_zero}
\begin{aligned}
{Z_\ii^\nn}^\top &:= \begin{bmatrix}
{Z_\ii}^\top \\
O_{(\nn-\nu_\ii)\times \nn} \\
\end{bmatrix}, \quad
& z_\ii^\nn(k) := \begin{bmatrix}
z_\ii(k) \\
0_{(\nn-\nu_\ii)\times 1} \\
\end{bmatrix},\\
\hat z_\ii^\nn(k) &:= \begin{bmatrix}
\hat z_\ii(k) \\
0_{(\nn-\nu_\ii)\times 1} \\
\end{bmatrix},
&\tilde z_\ii^\nn(k) := \begin{bmatrix}
\tilde z_\ii(k) \\
0_{(\nn-\nu_\ii)\times 1} \\
\end{bmatrix}.
\end{aligned}
\end{alignat} 
This augmentation of zeros is to match the size of each matrix so that it agrees with the $\nn$-stacked vector considered in Section \ref{sec:err}.
Now, \eqref{eq:Zxzz} with \eqref{eq:err_z} is written in a compact form as
\begin{align} \label{eq:dyn_hatz}
\hat z(k) = \Phi x(k) + \tilde z(k) = \Phi x(k) + v(k) + e(k) \in\R^\np,
\end{align}
where $
\Phi :=\left[Z_1^\nn~Z_2^\nn~\cdots~Z_\pp^\nn\right]^\top
$.
It is supposed that additional zero elements are also appended to $v_\ii (k)$ and $e_\ii (k)$ as in \eqref{eq:aug_zero}. 

Since \eqref{eq:dyn_hatz} exactly matches with \eqref{eq:static_y}, one can directly apply the error correction technique developed in Section \ref{sec:err} into \eqref{eq:dyn_hatz}. 
Theorems \ref{thm:finite_thmv} and \ref{thm:suff_qv} are mainly employed so as to recover $x(k)$.
Before applying them, one should check that three conditions on those theorems are satisfied for the given system \eqref{eq:plant}: boundedness of $v(k)$, $\qq$-sparsity of $e(k)$, and $\qq$-error correctability of $\Phi$.
The first two conditions are easily satisfied by Assumptions \ref{ass:bdd} and \ref{ass:sparse}.
That is, the noise vector $v_\ii(k)$ is bounded by $v_{\max} (k)$ for all $\ii \in [\pp]$ by \eqref{eq:err_eq1}, which is induced from Assumption \ref{ass:bdd}. 
Since the error vector $e_\ii (k)$ depends only on the attack element $a_\ii (j)$ for $0\le j\le k-1$ and $a(j)$ is $\qq$-sparse according to Assumption \ref{ass:sparse}, the vector $e(k)$ is ($\nn$-stacked) $\qq$-sparse.
An additional assumption should be declared for the last condition, the $\qq$-error correctability of $\Phi$, to be fulfilled.

\begin{assmpt} \label{ass:redun}
	The pair $(A, C)$ is 2$\qq$-redundant observable.~\hfill$\Diamond$
\end{assmpt}

Under Assumption \ref{ass:redun}, Propositions \ref{lem:corr_equiv} and \ref{lem:red_obs_equiv} ensure the $\qq$-error correctability of $G$ where $G$ is given in \eqref{eq:barG}. 
Thus, $\Phi$ is also $\qq$-error correctable because $\RR(\Phi_{\Lambda^\nn}^\top)=\RR(G_{\Lambda^\nn}^\top)$ for any $\Lambda \subset [\pp]$ by the construction of $\Phi$ and its elements ${Z_\ii}^\top$.
Therefore, all three conditions on Theorems \ref{thm:finite_thmv} and \ref{thm:suff_qv} hold.

\begin{figure} [t]
	\centering
	\includegraphics[width=0.9\columnwidth]{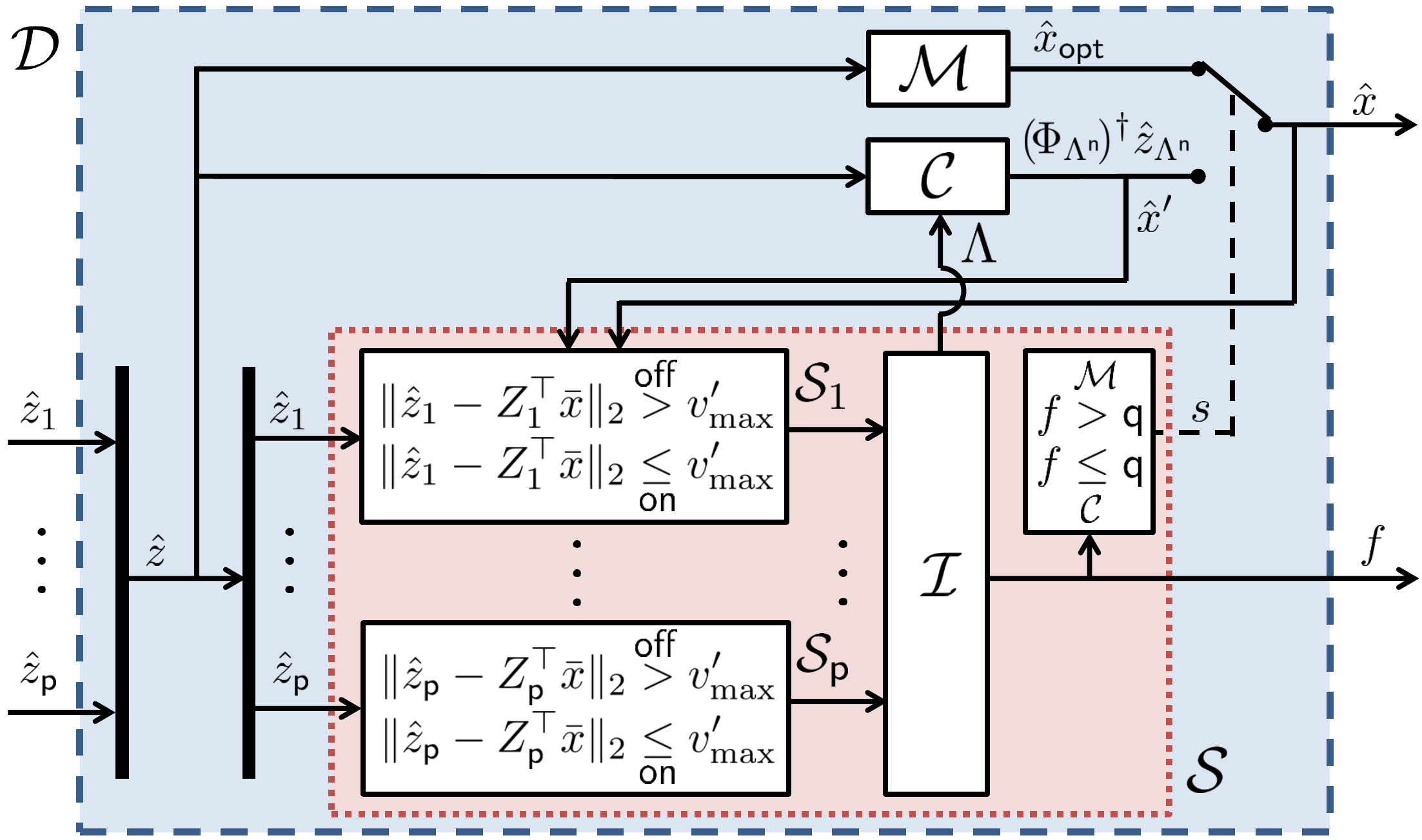}
	\caption{Configuration of the decoder.}\label{fig:decoder}
\end{figure}

\begin{algorithm} [t]
	\caption{Operation of the decoder}\label{alg:dec}
	\begin{algorithmic}[1]
		\REQUIRE $\hat z_1$, $\hat z_2$, $\cdots$, $\hat z_\pp$
		\ENSURE $\hat x$, $f$
		\INITIALIZE $\Lambda = [\pp]$ 
		\STATE {\bf while} System \eqref{eq:plant} is running {\bf do}
		\STATE \quad $\hat x' = \left(\Phi_{\Lambda^{\nn}}\right)^{\dagger}\hat z_{\Lambda^{\nn}}$
		\STATE \quad $f = \pp - \left| \left\{\ii\in[\pp] : \|\hat z_\ii - Z_\ii^\top \hat x' \|_2 \le v_{\max}'\right\} \right|$		
		\STATE \quad {\bf if} $f \le \qq$ {\bf then}
		\STATE \quad\quad Switch $s$ selects the line from $\CC$
		\STATE \quad\quad Calculator $\CC$ sets $ \hat x = \hat x'$
		\STATE \quad{\bf else} 
		\STATE \quad\quad Switch $s$ selects the line from $\MM$
		\STATE \quad\quad Minimizer $\MM$ solves \eqref{eq:finite_opt_v'} and produces $\hat x = \hat x_{\sf opt}$ 
		\STATE \quad\quad Update $\Lambda = \left\{\ii\in[\pp] : \|\hat z_\ii - Z_\ii^\top \hat x \|_2 \le v_{\max}'\right\}$ 
		\STATE \quad{\bf end if}
		\STATE {\bf end while}
	\end{algorithmic}
\end{algorithm}

The decoder's configuration is sketched in Fig.~\ref{fig:decoder} and its operation is described in Algorithm \ref{alg:dec}.
During the operation of the decoder, the monitoring scheme that the selector $\mathcal{S}$ and the switch $s$ perform is running on the basis of Theorem \ref{thm:suff_qv}, whereas the calculator $\CC$ and the minimizer $\MM$ have their roots in Theorems \ref{thm:detec_thmv} and \ref{thm:finite_thmv}, respectively. 
Note that $\bar x$ of each selector $\mathcal{S}_\ii$ in Fig.~\ref{fig:decoder} represents $\hat x'$ at the stage of monitoring (line 3 of Algorithm \ref{alg:dec}) and $\hat x$ during the updating step (line 10 of Algorithm \ref{alg:dec}).
If $f \le \qq$, the successful state estimation is ensured by Theorem \ref{thm:suff_qv}.(i). 
More specifically, we have $\|\hat x - x\|_2 \le \kappa_{\pp,\qq,\rr}^c (\Phi) v_{\max}$.
In this case, the index set $\Lambda$ can be supposed to be attack-free, and hence the calculator $\CC$ can recover the original state $x$ approximately by $ \hat x = \left(\Phi_{\Lambda^{\nn}}\right)^{\dagger}\hat z_{\Lambda^{\nn}}$, which is attributed to Theorem \ref{thm:detec_thmv}.
On the other hand, if $f > \qq$, the state estimate $\hat x$ is not close enough to the original state $x$ by Theorem \ref{thm:suff_qv}.(ii).
Hence, the algorithm goes to the minimizer step (i.e., the switch $s$ chooses the side of the minimizer $\MM$) to figure out new healthy sensors and the state estimates $\hat x$ by $\hat x_{\sf opt}$.
Furthermore, Theorem \ref{thm:finite_thmv} guarantees that $\|\hat x - x\|_2 \le \kappa_{\pp,\qq,\rr}^c (\Phi) v_{\max}$.
These results are summarized in the following theorem.

\begin{thm} \label{thm:robust_secure}
	Under Assumptions \ref{ass:bdd}, \ref{ass:sparse}, and \ref{ass:redun}, the estimator $\EE$ equipped with the observers $\OO_\ii$ given by \eqref{eq:observer} and the decoder $\DD$ employing Algorithm \ref{alg:dec}, guarantees that
	\begin{equation*}
	\|\hat x(k) - x(k)\|_2 \le \kappa_{\pp,\qq,\rr}^c (\Phi)~v_{\max} (k), \quad {}^\forall k \ge 0,
	\end{equation*} 
	where $\lim_{k\to\infty} v_{\max}(k) = w_{\max}$.
\end{thm}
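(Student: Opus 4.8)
The plan is to read off the error bound directly from the two branches of Algorithm~\ref{alg:dec}, applying the static results of Section~\ref{sec:err} at each time instant. First I would fix $k \ge 0$ and record that the measurement model \eqref{eq:dyn_hatz}, $\hat z(k) = \Phi x(k) + v(k) + e(k)$, satisfies the standing hypotheses of Theorems~\ref{thm:finite_thmv} and~\ref{thm:suff_qv}: by \eqref{eq:err_z}--\eqref{eq:err_eq1} together with Assumption~\ref{ass:bdd}, the noise part obeys $\|v_\ii^\nn(k)\|_2 \le v_{\max}(k)$ for all $\ii \in [\pp]$; since each $e_\ii(k)$ involves only $a_\ii(j)$, $0 \le j \le k-1$, Assumption~\ref{ass:sparse} forces $e(k) \in \Sigma_\qq^\nn$; and Assumption~\ref{ass:redun} makes $\Phi$ be ($\nn$-stacked) $\qq$-error correctable, via Propositions~\ref{lem:red_obs_equiv}, \ref{lem:detec_equiv}, \ref{lem:corr_equiv} together with $\RR(\Phi_{\Lambda^\nn}^\top)=\RR(G_{\Lambda^\nn}^\top)$ for every $\Lambda$. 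I would also note the bookkeeping identity $\|\hat z_\ii^\nn(k) - \Phi_{\Gamma_\ii^\nn}^\pi \chi\|_2 = \|\hat z_\ii(k) - Z_\ii^\top \chi\|_2$, forced by the zero-padding \eqref{eq:aug_zero} and $\Phi_{\Gamma_\ii^\nn}^\pi = {Z_\ii^\nn}^\top$, so that the residual the decoder tests in lines~3 and~10 is exactly the one appearing in those theorems.

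Then I would split on the value $f$ computed in line~3. In the branch $f \le \qq$ the decoder returns $\hat x(k) = \hat x'$, and because $f = |\{\ii \in [\pp]: \|\hat z_\ii^\nn(k) - \Phi_{\Gamma_\ii^\nn}^\pi \hat x'\|_2 > v_{\max}'(k)\}| \le \qq$, Theorem~\ref{thm:suff_qv}.(i) gives at once $\|\hat x(k) - x(k)\|_2 \le \kappa_{\pp,\qq,\rr}^c(\Phi)\,v_{\max}(k)$. In the branch $f > \qq$ the decoder returns $\hat x(k) = \hat x_{\sf opt}$, a solution of \eqref{eq:finite_opt_v'}; by Theorem~\ref{thm:finite_thmv}.(i) it is then also a solution of \eqref{eq:finite_opt_v}, and Theorem~\ref{thm:finite_thmv}.(ii) yields the same bound. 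Since $k$ was arbitrary, the inequality holds for all $k \ge 0$, and $\lim_{k\to\infty} v_{\max}(k) = w_{\max}$ follows from the definition of $v_{\max}(k)$ below \eqref{eq:err_eq1} because $0 < \beta < 1$.

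The step requiring genuine care — and the main obstacle — is to justify that line~2 stays meaningful across iterations: $\hat x' = (\Phi_{\Lambda^\nn})^\dagger \hat z_{\Lambda^\nn}$ is the intended least-squares reconstruction only when $\Phi_{\Lambda^\nn}$ has full column rank, i.e. $|\Lambda| \ge \pp-\qq$, which is ensured by $\qq$-redundant observability (implied by Assumption~\ref{ass:redun} via Proposition~\ref{lem:red_obs_equiv}). I would prove $|\Lambda| \ge \pp-\qq$ as a loop invariant: it holds at initialization $(\Lambda=[\pp])$, and it is preserved by line~10, since when $f > \qq$ the solution $(\hat x_{\sf opt}, \hat e)$ of \eqref{eq:finite_opt_v} has $\|\hat e\|_{0^\nn} \le \qq$ (established inside the proof of Theorem~\ref{thm:finite_thmv}.(ii)), and by the equivalence in Theorem~\ref{thm:finite_thmv}.(i) this cardinality equals $|\{\ii : \|\hat z_\ii(k) - Z_\ii^\top \hat x_{\sf opt}\|_2 > v_{\max}'(k)\}|$, so the updated $\Lambda$ has at least $\pp-\qq$ elements. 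Since Theorem~\ref{thm:suff_qv}.(i) imposes no condition on how $\hat x'$ is produced, this invariant is not strictly needed for the error bound, but I would include it to keep the description of the decoder well posed. Apart from this, the argument is a routine assembly of the already-proved results of Section~\ref{sec:err}.
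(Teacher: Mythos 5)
Your proposal is correct and follows essentially the same route as the paper: verify the three hypotheses (bounded $v(k)$ from Assumptions \ref{ass:bdd} and \eqref{eq:err_eq1}, $\qq$-sparsity of $e(k)$ from Assumption \ref{ass:sparse}, $\qq$-error correctability of $\Phi$ from Assumption \ref{ass:redun} via Propositions \ref{lem:red_obs_equiv} and \ref{lem:corr_equiv} and the range identity), then split on $f \le \qq$ versus $f > \qq$ and invoke Theorem \ref{thm:suff_qv}.(i) and Theorem \ref{thm:finite_thmv}, respectively. The loop-invariant argument for $|\Lambda| \ge \pp - \qq$ is a reasonable well-posedness check that the paper leaves implicit, but as you note it is not needed for the stated bound.
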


\begin{rem} 
	For the resilient state estimation, most of the computational burden originates from the process of solving the optimization problem.
	The proposed decoder reduces the computational effort by combining the attack detection mechanism with the optimization process.
	Algorithm \ref{alg:dec} only requires the minimization problem to be solved for a very short time interval when the attacker first attempts to inject false data so that the decoder has $f > \qq$ at that instant.
	On the other hand, the estimator works as if there is no attack and computes only one simple pseudoinverse of a matrix during normal operation when $f \le \qq$ is guaranteed.
\end{rem}

\begin{rem} 
	Other observer-based resilient state estimators such as those in \cite{Pasqualetti13} and \cite{Chong15}, consist of all possible combinations of estimator candidates.
	Thus, they need to run $\binom{\pp}{\qq}$ estimators so that the required memory size is $\nn\binom{\pp}{\qq}$. 
    On the other hand, the total memory size of all partial observers in the proposed estimator, $\sum_{\ii=1}^{\pp} \nu_{\ii}$, is not greater than $\np$ because the size of each partial observer $\OO_\ii$ is $\nu_\ii \le \nn$ for all $\ii\in[\pp]$.  
\end{rem}

\section{Simulation Results: Three Inertia System} \label{sec:example}

\begin{figure}[t]
	\centering
	\begin{subfigure}[t]{.6\columnwidth}
		\centering
		\includegraphics[width=\columnwidth]{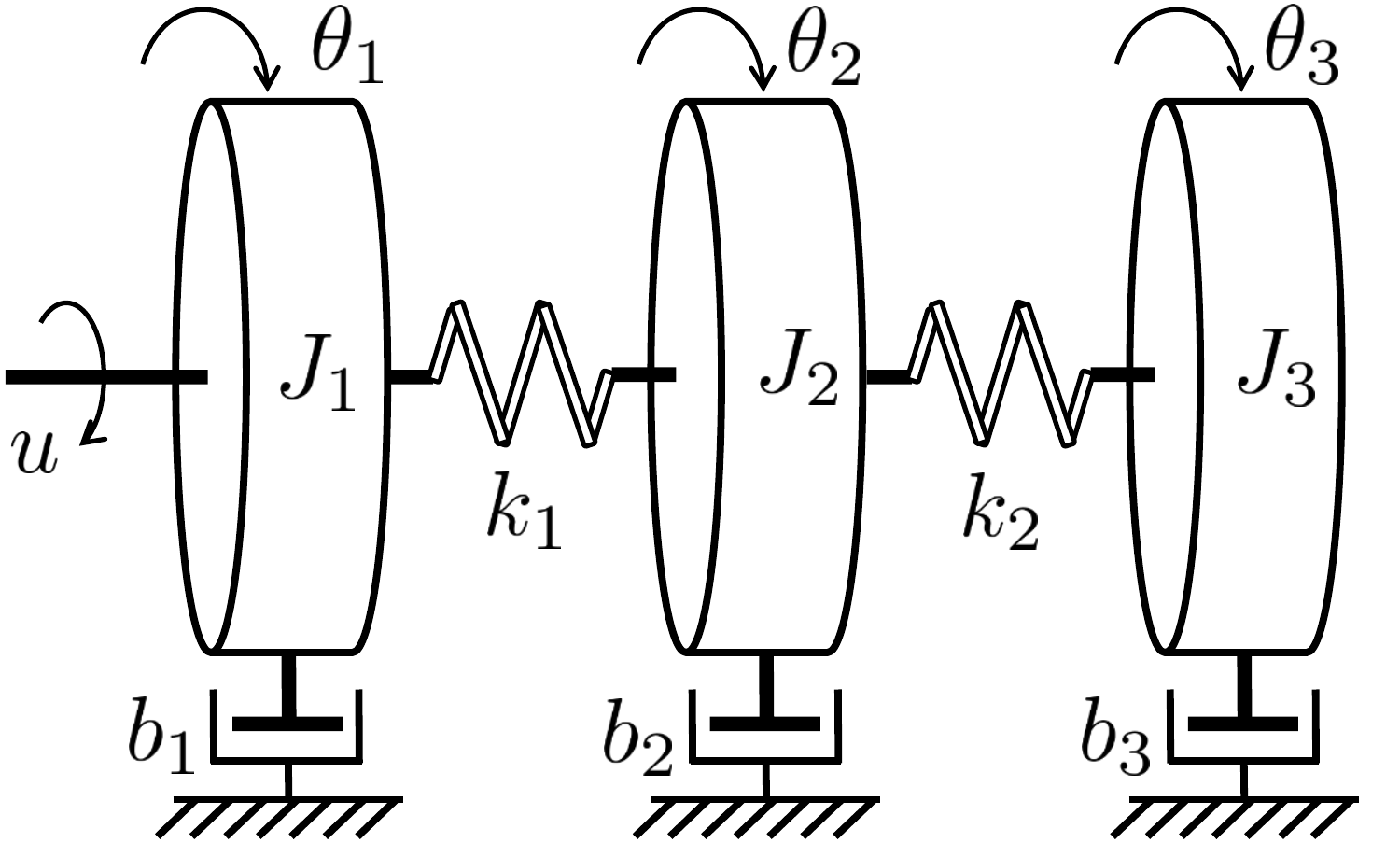}
		\caption{System model}
		\label{fig:example}
	\end{subfigure}
    
	\begin{subfigure}[t]{.75\columnwidth}
		\centering
		\includegraphics[width=\columnwidth]{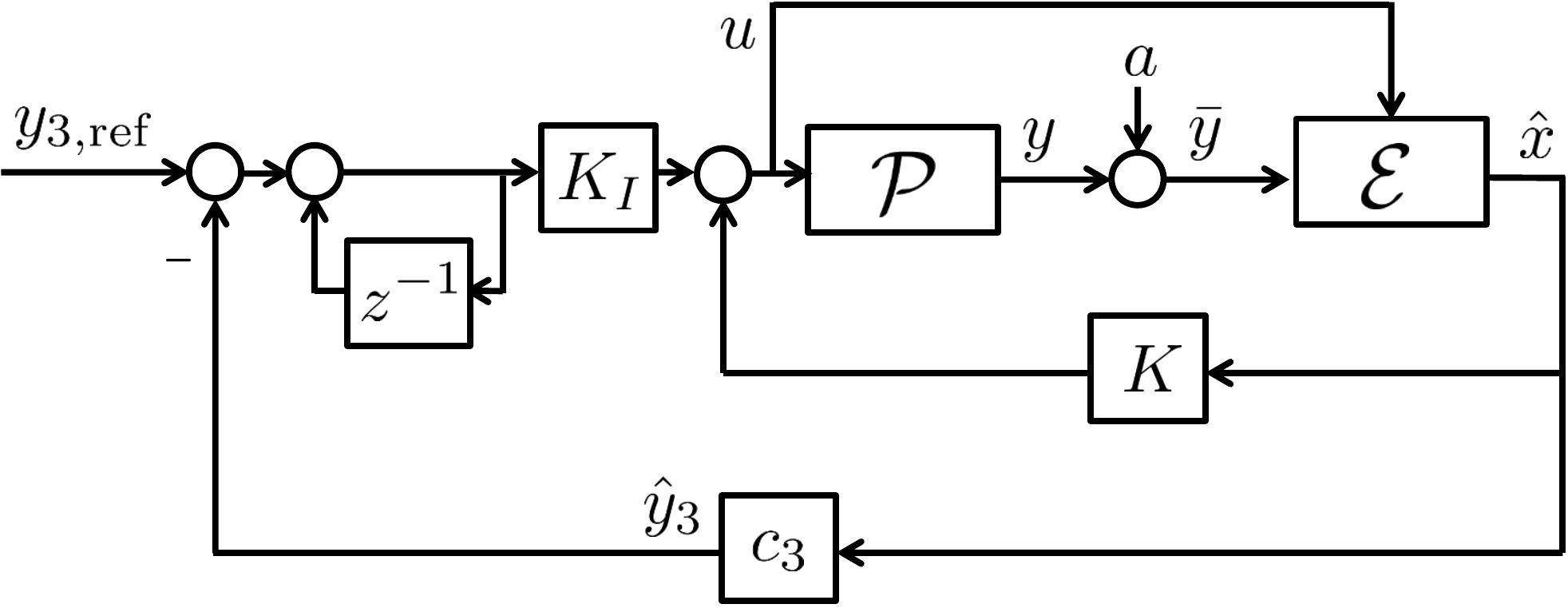}
		\caption{Control block diagram}
		\label{fig:int}
	\end{subfigure}
	\caption{Three inertia system and its control scheme.}
	\label{fig:exsystem}
\end{figure}

To verify the effectiveness of the proposed scheme, simulations with a three inertia system are conducted in this section.
The configuration of the three inertia system is described in Fig.~\ref{fig:example} and its dynamics can be represented by a continuous-time state-space equation 
\begin{align}[left = \PP_c: \empheqlbrace\,] 
\begin{split}\label{eq:plantc}
& \dot x(t) = A_c x(t) + B_c u(t) + d(t)\\
& y(t) = C_c x(t) + n(t) + a(t)
\end{split}
\end{align}
with the matrices
\begin{align*} 
\begin{split}
A_c&=\begin{bmatrix}
0 & 1 & 0 & 0 & 0 & 0 \\
-\frac{k_1}{J_1} & -\frac{b_1}{J_1} & \frac{k_1}{J_1} & 0 & 0 & 0 \\
0 & 0 & 0 & 1 & 0 & 0 \\
\frac{k_1}{J_2} & 0 & -\frac{k_1 + k_2}{J_2} & -\frac{b_2}{J_2} & \frac{k_2}{J_2} & 0 \\
0 & 0 & 0 & 0 & 0 & 1 \\
0 & 0 & \frac{k_2}{J_3} & 0 & -\frac{k_2}{J_3} & -\frac{b_3}{J_3} \\
\end{bmatrix},
\end{split}\\ 
\begin{split}
B_c&=\begin{bmatrix}
0 \\
\frac{1}{J_1} \\
0 \\
0 \\
0 \\
0 \\
\end{bmatrix},\quad
C_c=\begin{bmatrix}
1 & 0 & 0 & 0 & 0 & 0 \\
0 & 0 & 1 & 0 & 0 & 0 \\
0 & 0 & 0 & 0 & 1 & 0 \\
1 & 0 & -1 & 0 & 0 & 0 \\
0 & 0 & 1 & 0 & -1 & 0 \\
\end{bmatrix},
\end{split}
\end{align*}
where $J_1=J_2=J_3=0.01$ $\rm{kg\!\cdot\! m^2}$, $b_1=b_2=b_3=0.007$ $\rm{N\!\cdot\! m/(rad/s)}$, and $k_1=k_2=1.37$ $\rm{N\!\cdot\! m/rad}$.
Here, the state variables are $x:=[\theta_1~~\dot\theta_1~~\theta_2~~\dot\theta_2~~ \theta_3~~\dot\theta_3]^\top$ and the output measurements are $y:=\left[\theta_1~~\theta_2~~\theta_3~~\theta_1\!-\!\theta_2~~ \theta_2\! -\!\theta_3\right]^\top$.
In addition, the plant is corrupted by the uniformly bounded process disturbance $d$ and measurement noise $n$ with $d_{\max} = n_{\max} = 0.001$. 
To conduct a discrete-time simulation, the zero-order hold equivalent model of \eqref{eq:plantc} is considered, that is, the matrices of the discrete-time system \eqref{eq:plant} are given by
$A := e^{A_c T_s}$, $B := \Big( \int_0^{T_s} e^{A_c \tau}d\tau \Big) B_c$, and $C := C_c$
where $T_s := 1$~ms denotes the sampling time.
Note that the pair $(A, C)$ is 2-redundant observable, which implies that one can correct the 1-sparse attack signal and its dynamic security index becomes $3$.  
The control objective is to make the output $\theta_3$ follow the step reference $\theta_{3,{\rm ref}}$.
To this end, an observer-based feedback integral control scheme is adopted, as illustrated in \cite[Section 6.7]{Ogata95} and also in Fig.~\ref{fig:int}.
First, the state feedback gains $K$ and $K_I$ are chosen as
$$K:=-[2.32~~ 0.25~~ -\!2.47~~0.04~~1.70~~0.12], ~~~ K_I:=0.002,$$
as if the state $x$ is available.
Then, instead of using the conventional Luenberger observer, the proposed estimator ${\cal E}$ provides the estimate $\hat x$ of $x$.
The injection gains $L_\ii$ of the partial observer \eqref{eq:observer} in ${\cal E}$ are chosen arbitrarily such that $F_\ii=S_\ii - L_\ii t_\ii$ is Schur stable.
Attack signals are illustrated in Fig.~\ref{fig:attackval}, which describes that adversaries launch a measurement data injection attack at $t=2$~s so that the first sensor is compromised.
Figures \ref{fig:t1} and \ref{fig:t2d} show state trajectories $\theta_1(t)$, ${\dot\theta}_2 (t)$, and their estimates.
It demonstrates the attack-resilient property of our estimation algorithm.
Finally, Fig.~\ref{fig:ref} shows the reference tracking performance of the proposed control scheme.

\begin{figure}[t]
	\centering
	\begin{subfigure}[t]{0.95\columnwidth}
		\centering
		\includegraphics[width=\columnwidth]{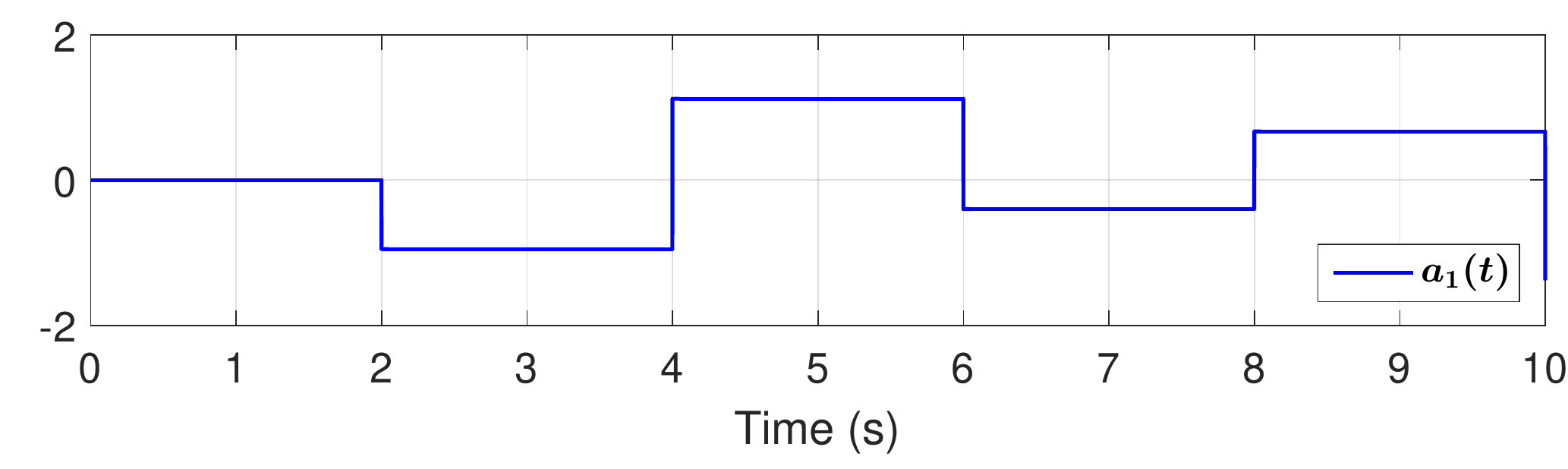}
		\caption{$a_1(t)$}\label{fig:attackval}
	\end{subfigure}
    
	\begin{subfigure}[t]{0.95\columnwidth}
		\centering
		\includegraphics[width=\columnwidth]{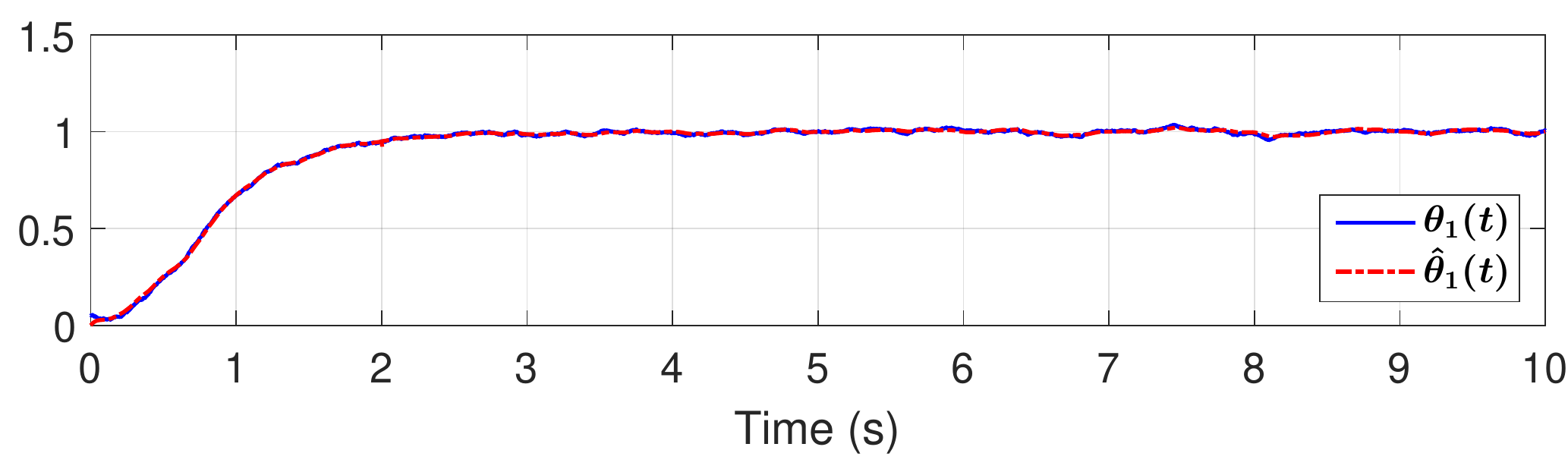}
		\caption{$\theta_1(t)$ and $\hat \theta_1(t)$}\label{fig:t1}
	\end{subfigure}
	
	\begin{subfigure}[t]{0.95\columnwidth}
		\centering
		\includegraphics[width=\columnwidth]{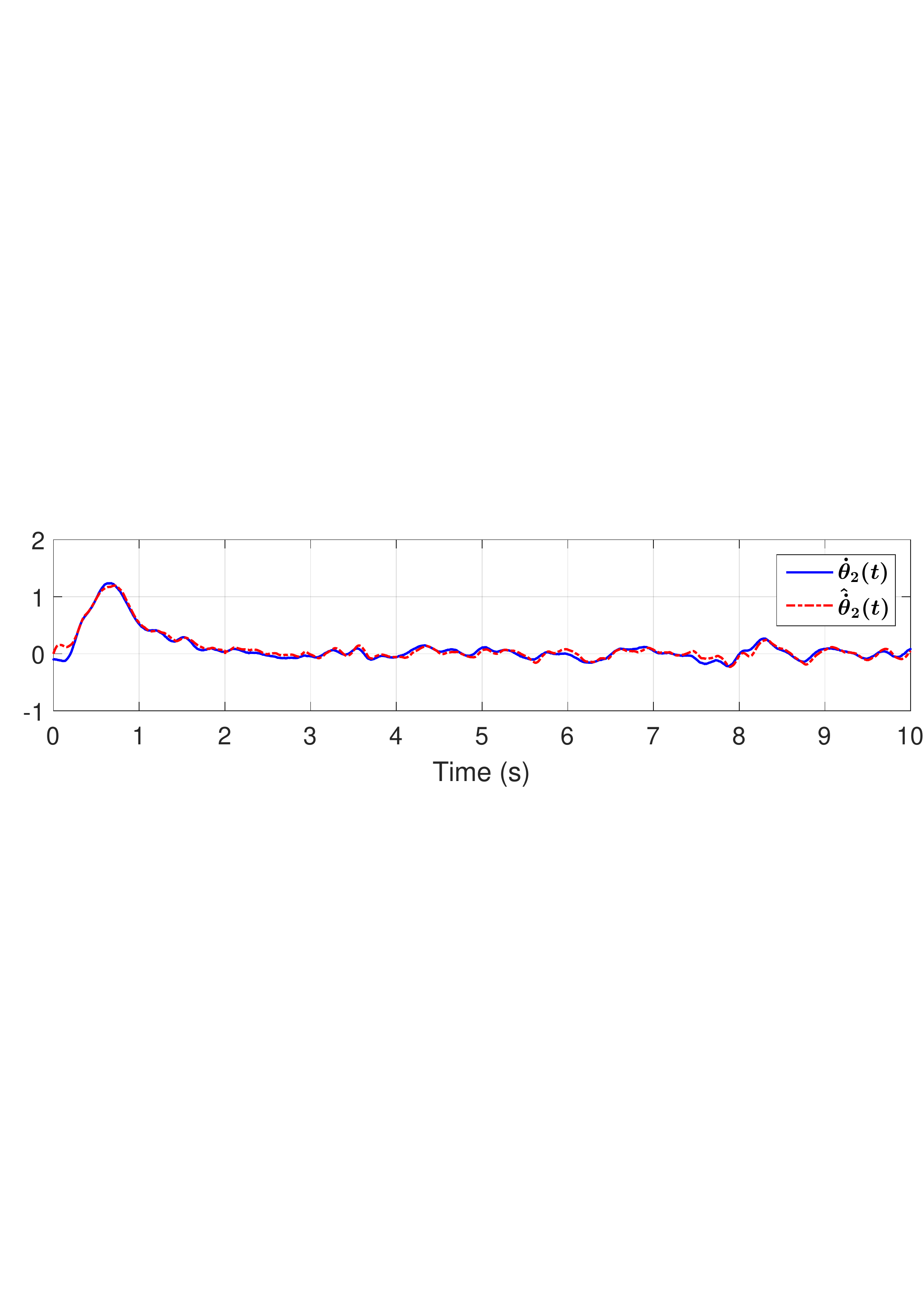}
		\caption{$\dot\theta_2(t)$ and $\hat{\dot\theta}_2(t)$}\label{fig:t2d}
	\end{subfigure}
    
	\begin{subfigure}[t]{0.95\columnwidth}
		\centering
		\includegraphics[width=\columnwidth]{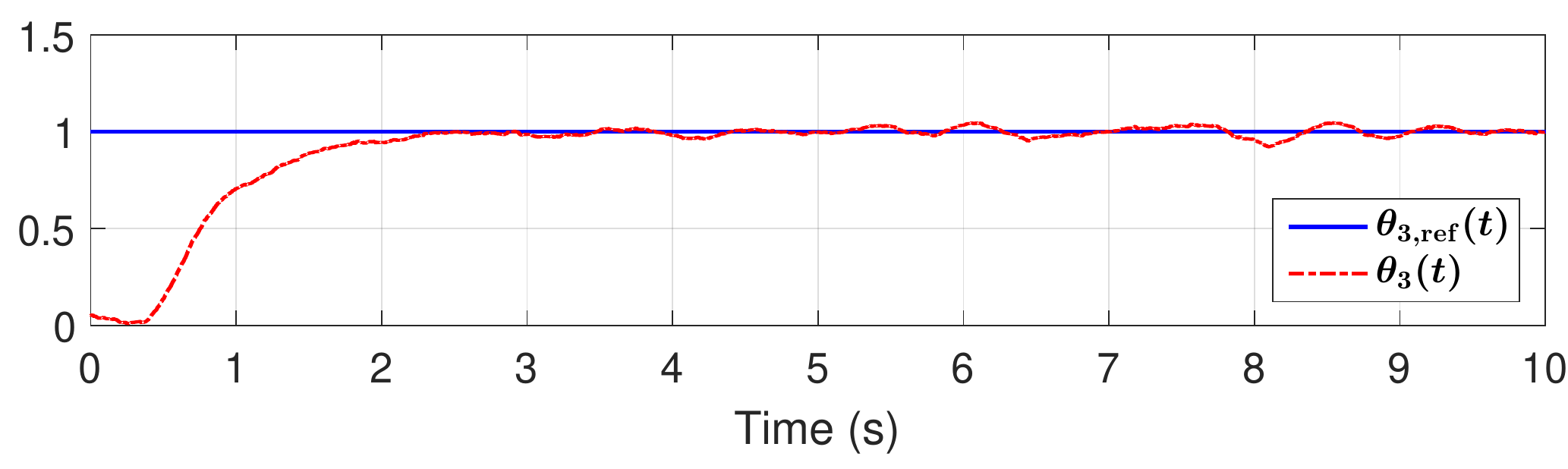}
		\caption{$\theta_{3,{\rm ref}}(t)$ and $\theta_3(t)$}\label{fig:ref}
	\end{subfigure}
	\caption{Plot of signals.}\label{fig:signal}
\end{figure}

\section{Conclusion} \label{sec:con}

An LTI system is said to be $2\qq$-redundant observable if it is observable even after eliminating any $2\qq$ measurements.
Relationships between the redundant observability and the security problems on cyber-physical systems under sensor attacks have been examined.
To summarize, $2\qq$-redundant observability implies that the numbers of detectable and correctable sensor attacks are $2\qq$ and $\qq$, respectively. 
In addition, the dynamic security index, the minimum number of attacks to remain undetectable, is $2\qq + 1$.

Assuming that the measurement data injection attack is $\qq$-sparse and the disturbances/noises are bounded, an attack-resilient and robust state estimation scheme has been proposed under $2\qq$-redundant observability.
The proposed estimator consists of a bank of partial observers operating based on the Kalman observability decomposition and a decoder exploiting error correction techniques.
In terms of time complexity, the decoder reduces the required computational effort by reducing the search space to a finite set and by combining a detection algorithm with the optimization process.
On the other hand, in terms of space complexity, the required memory is linear with the number of sensors by means of the decomposition used for constructing a bank of partial observers.


\end{document}